\tikzstyle{small}=[font=\footnotesize]
\tikzset{
    every picture/.style={>=stealth,auto,node distance=2cm},
}
\newtheorem{theorem}{Theorem}
\newtheorem{lemma}[theorem]{Lemma}
\newtheorem{proposition}[theorem]{Proposition}
\newtheorem{definition}{Definition}
\newcommand{\hide}[1]{}
\newcommand{\Act}{\mathit{Act}}
\newcommand{\ignore}[1]{}
\newcommand{\N}{\mathbb{N}}
\newcommand{\Z}{\mathbb{Z}}
\newcommand{\C}{\mathbb{C}}
\newcommand{\x}{\times}
\newcommand{\R}{Spoiler}
\newcommand{\Scolour}{black}
\newcommand{\V}{Duplicator}
\newcommand{\Dcolour}{white}
\newcommand{\step}[1]{\Step{#1}{}{}}
\newcommand{\Step}[3]{\ensuremath{\,{\stackrel{#1}{\longrightarrow}}\!{}^{\scriptstyle{#2}}_{\scriptstyle{#3}}}\,}
\newcommand{\SIMSYMBOL}{\preceq}
\newcommand{\SIM}[2]{\ensuremath{\mathrel{\SIMSYMBOL^{#1}_{#2}}}}
\newcommand{\notSIM}[2]{\ensuremath{\mathrel{\not\SIMSYMBOL^{#1}_{#2}}}}
\newcommand{\ssim}{\SIMSYMBOL}
\newcommand{\nic}[1]{}
\begin{document}

\title{Infinite-State Energy Games\thanks{This is the technical report number EDI-INF-RR-1419 of
the School of Informatics at the University of Edinburgh,
UK. (http://www.inf.ed.ac.uk/publications/report/). 
Full version (including proofs) of material presented at CSL-LICS 2014 (Vienna, Austria). arXiv.org - CC BY 3.0.}}

\author{
\IEEEauthorblockN{Parosh Aziz Abdulla\IEEEauthorrefmark{1},
Mohamed Faouzi Atig\IEEEauthorrefmark{1},
Piotr Hofman\IEEEauthorrefmark{2},
Richard Mayr\IEEEauthorrefmark{4},
K.~Narayan Kumar\IEEEauthorrefmark{3},
Patrick Totzke\IEEEauthorrefmark{4}}
\IEEEauthorblockA{\IEEEauthorrefmark{1}Uppsala University, Sweden}
\IEEEauthorblockA{\IEEEauthorrefmark{2}University of Bayreuth, Germany}
\IEEEauthorblockA{\IEEEauthorrefmark{3}Chennai Mathematical Institute, India}
\IEEEauthorblockA{\IEEEauthorrefmark{4}University of Edinburgh, UK}
}

\maketitle

\begin{abstract}
Energy games are a well-studied class of 
2-player turn-based games on a finite graph
where transitions are labeled with integer vectors which represent 
changes in a multidimensional resource (the energy). One player tries to
keep the cumulative changes non-negative in every component while
the other tries to frustrate this.

We consider generalized energy games played on
infinite game graphs induced by pushdown automata (modelling recursion)
or their subclass of one-counter automata.

Our main result is that energy games are decidable in the case where the game 
graph is induced by a one-counter automaton and the energy is one-dimensional.
On the other hand, every further generalization is undecidable:
Energy games on one-counter automata with a 2-dimensional energy are
undecidable, and energy games on pushdown automata are undecidable even if the
energy is one-dimensional.

Furthermore, we show that energy games and simulation games are
inter-reducible, and thus we additionally obtain several 
new (un)decidability results for the problem of checking simulation
preorder between pushdown automata and vector addition systems.
\end{abstract}
\noindent\begin{keywords}
Automata theory; Energy games.
\end{keywords}

\section{Introduction}\label{sec:introduction}

Two-player turn-based games on transition graphs provide the mathematical foundation for the 
analysis of reactive systems, and they are used to solve many problems in
formal verification, e.g., in model checking and semantic equivalence checking
\cite{Stirling:IGPL99}.
The vertices of the game graph represent states of the system, and they 
are partitioned into subsets that belong to Player $0$ and
Player $1$, respectively. The game starts at an initial vertex, and in every
round of the game the player who owns the current vertex chooses an outgoing
transition leading to the next vertex. This yields an either finite or
infinite sequence of visited vertices, called a {\em play} of the game.

Various types of games define different winning conditions that classify a
play as winning for a given player, e.g., reachability, safety, liveness,
$\omega$-regular, or parity objectives.

A generalization of such games introduces quantitative aspects and
corresponding quantitative winning conditions. Transitions are labeled
with numeric values, typically integers, that are interpreted as the cost or
reward of taking this transition, e.g., elapsed time, lost/gained material
resources or energy, etc. The value of (a prefix of) a play is then defined
as the sum of the values of the used transitions.
Further generalizations use multi-dimensional labels (i.e., vectors of
integers) instead of single integers.

The most commonly studied quantitative games are {\em energy games}
and {\em limit-average games} (also called mean-payoff games), 
which differ in the quantitative winning
condition.

In energy games, the objective of Player $1$ is to forever keep the value of
the prefix of the play
non-negative (resp.\ non-negative in every component, for multidimensional
values), while Player $0$ tries to frustrate this. 
Intuitively, this means that the given resource (e.g., the stored energy) must never run
out during the operation of the system.
Clearly such games are monotone in the resource value, in the sense that
higher values are always beneficial for Player $1$.

There are two classic problems about energy games.
In the {\em fixed initial credit problem} 
one asks whether Player $1$ has a winning strategy from a given starting
configuration with a fixed initial energy (resource value).
In the {\em unknown initial credit problem} one quantifies over this initial
energy and asks whether there exists a sufficiently high value for Player $1$
to win. Even if the answer is positive,
this does not yield any information about the minimal initial energy required.

In limit-average games, the objective is to maximize the average value per
step of the
play in the long run. I.e., one asks whether Player $1$ has a strategy to
keep the average value per transition above a given number $k$ in the long
run. Limit-average games are closely related to the unknown initial credit
problem in energy games, since in both cases one tries to maximize the payoff
in the long run without considering short-term fluctuations.
(The fixed initial credit problem in energy games is different however,
since local fluctuations matter.)

{\bf Previous work on finite game graphs.}
Most previous work on quantitative games has considered energy games and
limit-average games on finite game graphs, sometimes combined with classic
winning conditions such as parity objectives.
The  unknown and fixed initial credit problems for one-dimensional 
energy parity games were shown to be decidable in
\cite{Chatterjee-Doyen:TCS2012}. 
The unknown initial credit problem for multidimensional energy parity games 
is known to be coNP-complete \cite{Chatterjee:FSTTCS2010,CRR:CONCUR2012}. 
The fixed initial credit problem for $n$-dimensional energy games can be
solved in $n$-EXPTIME \cite{icalp:BrazdilJK10}, and the 
fixed initial credit problem for multidimensional energy parity games is 
decidable \cite{AMSS:CONCUR2013}. An EXPSPACE lower bound follows by a
reduction from Petri net coverability \cite{lipton}.
Multidimensional limit-average games are coNP-complete
\cite{Chatterjee:FSTTCS2010}.

{\bf Previous work on infinite game graphs.}
Pushdown automata have been studied extensively as
a model for the analysis of recursive programs (e.g.,
\cite{BEM97,EHRS00,Walukiewicz00,Walukiewicz01}).
Due to the unbounded stack memory, they typically induce infinite transition graphs.
Two-player reachability, B\"uchi and parity games on pushdown automata are
EXPTIME-complete \cite{PitermanV04,Cachat02,Walukiewicz00,Walukiewicz01}.
A strict subclass of pushdown automata are one-counter automata, i.e., Minsky
machines with a single counter. They correspond to pushdown automata with only
one stack symbol (plus a non-removable bottom stack symbol).
Two-player reachability, B\"uchi and parity games on one-counter automata are
PSPACE-complete \cite{Serre:FOSSACS06,MGT:LICS2009}.

Quantitative extensions of pushdown games with limit-average objectives have
been studied in
\cite{Chatterjee-Velner:LICS2012,Chatterjee-Velner:CONCUR2013}. 
The authors show the undecidability of limit-average pushdown games with one
resource-dimension, by reduction from the non-universality problem 
of weighted finite automata \cite{DBLP:conf/atva/AlmagorBK11}. On the other
hand, they prove the decidability of limit-average pushdown games under modular
strategies (a restriction on how the resources interact with the recursion).

{\bf Our contribution.}
We consider energy games that are played on infinite game graphs that
are induced by either pushdown automata or one-counter automata.
We consider both single-dimensional and multi-dimensional energies and focus
on the fixed initial credit problem.

Our first observation is that energy games are closely connected to simulation
games, i.e., to checking simulation preorder between transition graphs.
Energy games on pushdown automata (resp.\ one-counter automata) with
$n$-dimensional energy are inter-reducible with checking simulation preorder
between pushdown automata (resp.\ one-counter automata) and $n$-dimensional
vector addition systems with states (VASS; aka Petri nets).
Using this connection, we show several (un)decidability results for
infinite-state energy games. 

We show that the winning sets in
single-dimensional energy games on one-counter automata are semilinear
by establishing semilinearity of the corresponding simulation game.
This yields a positive semi-decision procedure for the fixed initial credit
problem by the decidability of Presburger arithmetic. Since negative
semi-decidability is easily achieved by unfolding the game tree,
we obtain the decidability of the fixed initial credit problem 
of single-dimensional energy games on one-counter automata.

Moreover, we show that every further generalized infinite-state energy game is
undecidable by reduction from the halting problem for Minsky machines.
The fixed initial credit problem for 2-dimensional energy games on one-counter 
automata is undecidable.
For energy games on pushdown automata, both the fixed and the unknown initial
credit problem are undecidable, even if the energy is just single-dimensional.

\newpage
\section{Preliminaries}\label{sec:preliminaries}

Let $\Z$ denote the integers and $\N$ the non-negative
integers.

\begin{definition}\label{def:lts}
A \emph{labeled transition system} is described by a triple
$T=(V,\Act,\step{})$ where $V$ is a (possibly infinite) set of
states, $\Act$ is a finite set of action labels and $\step{}\subseteq V\x \Act\x V$ 
is the labeled transition relation. We use the infix notation $s\step{a}s'$ for a
transition $(s,a,s')\in\step{}$, in which case we say $T$ makes an \emph{$a$-step} from $s$ to $s'$. 
In the context of games, $V = V_0 \cup V_1$ is partitioned into the subset $V_0$ of states that belong 
to player $0$ and $V_1$ of states that belong to player $1$.
\end{definition}

\begin{definition}\label{def:pda}
A \emph{pushdown automaton} $A=(Q,\Gamma,\Act,\delta)$
is given by a finite set of control-states $Q$, a finite stack alphabet $\Gamma$,
a finite set of action labels $\Act$, and a finite set of transitions 
$\delta \subseteq Q \x \Gamma \x \Act \x Q \x \Gamma^*$.
It induces a transition system over $V = Q \x \Gamma^+$ by
$qX\alpha \step{a} q'\beta\alpha$ iff $(q,X,a,q',\beta) \in \delta$, for any
$\alpha \in \Gamma^*$.
\end{definition}

\begin{definition}\label{def:peg}
A \emph{pushdown energy game} of dimension $n$ between two players $0$ and $1$ 
is given by $G=(Q_0, Q_1,\Gamma,\delta,n)$.
$Q_0$ and $Q_1$ are finite sets of control-states that belong to player $0$ 
and $1$, respectively. $\Gamma$ is a finite stack alphabet
and $n \in \N$ represents the dimension of the energy.

The transition relation 
$\delta \subseteq (Q_0\cup Q_1) \x \Gamma \x (Q_0 \cup Q_1) \x \Gamma^*
\x \{-1,0,1\}^n$ induces a game graph over $(Q_0 \cup Q_1) \x \Gamma^+ \x \Z^n$ as follows:
If $(q,X,q',\beta,C) \in \delta$ then
$(q,X\alpha,E) \step{} (q',\beta\alpha, E')$ 
for any $\alpha \in \Gamma^*$ and $E' = E+C$.
(We don't use transition labels, since they do not influence the
semantics of energy games.) 

In the induced game graph, configurations with control-states in $Q_0$ and $Q_1$ 
belong to player $0$ and $1$, respectively. The player who owns the current configuration
gets to choose the next step. Without restriction, we assume that every
configuration has at least one outgoing transition.
The game stops and Player $0$ wins if a configuration 
$(q,\alpha,E)$ is reached where 
$E=(e_1,\dots,e_n)$ with $e_i <0$ for some $i$.
Player $1$ wins every infinite game.
\end{definition}

In the special case where the stack is never used, the pushdown energy game
corresponds to just an ordinary $n$-dimensional energy game with a finite
control-graph
\cite{CRR:CONCUR2012,AMSS:CONCUR2013}.
In particular, the energy dimensions are not genuine counters. They cannot be
tested for zero and never influence the available transitions, but only affect
the winning condition of the game. 

One-counter automata can be seen as a special subclass of pushdown automata where
there is only one stack symbol plus a non-removable stack bottom symbol $\bot$. 
In order to keep the presentation clear, we define an explicit notation for
one-counter automata and games.

\begin{definition}\label{def:oca-ocn}
A \emph{one-counter automaton (OCA)} $A=(Q,\Act,\delta,\delta_0)$
is given by a finite set of control-states $Q$, a finite set of action labels $\Act$
and transition relations $\delta\subseteq Q\x \Act\x\{-1,0,1\}\x Q$
and $\delta_0\subseteq Q\x \Act\x\{0,1\}\x Q$.
Such an automaton is called a \emph{one-counter net (OCN)} if
$\delta_0=\emptyset$, i.e., if the automaton 
cannot test if the counter is equal to $0$.

These automata induce an infinite-state labeled transition system over the
stateset $Q\x\N$, whose elements will be written as $pm$, 
and transitions are defined as follows:
$pm\step{a}p'm'$ iff
        \begin{enumerate}
          \item $(p,a,d,p')\in\delta$ and $m'=m+d \geq 0$ or
          \item $(p,a,d,p')\in\delta_0$, $m=0$ and $m'=d$.
        \end{enumerate}
\end{definition}

\begin{definition}\label{def:ocaeg}
A \emph{one-counter energy game} of dimension $n$ between two players $0$ and $1$ 
is given by $G=(Q_0, Q_1,\delta,\delta_0,n)$.
$Q_0$ and $Q_1$ are finite sets of control-states that belong to player $0$ 
and $1$, respectively, and $n \in \N$ represents the dimension of the energy.
The transition relations $\delta\subseteq (Q_0 \cup Q_1)\x\{-1,0,1\}\x (Q_0 \cup Q_1) 
\x \{-1,0,1\}^n$
and $\delta_0\subseteq (Q_0 \cup Q_1)\x\{0,1\}\x (Q_0\cup Q_1) \x \{-1,0,1\}^n$
induce an infinite game graph over $(Q_0\cup Q_1)\x\N \x \Z^n$ as follows.
The number $m \in \N$ represents the value of the one genuine testable counter,
while the $E \in \Z^n$ represents the available multidimensional energy. 
We have $(p,m,E) \step{} (p',m',E')$ iff
        \begin{enumerate}
          \item $(p,d,p',C)\in\delta$ and $m'=m+d \geq 0$ and $E' = E+C$ or
          \item $(p,d,p',C)\in\delta_0$, $m=0$ and $m'=d$ and $E' = E+C$.
        \end{enumerate}
The players choose moves depending on who owns the current control-state.
The game stops and Player $0$ wins if a configuration 
$(q,k,E)$ is reached where 
$E=(e_1,\dots,e_n)$ with $e_i <0$ for some $i$.
Player $1$ wins every infinite game.
\end{definition}

\begin{definition}\label{def:VASS}
A \emph{vector addition system with states (VASS)} of dimension $n$ is
given by $(Q,\Act,\delta)$. $Q$ is a finite set of control-states,
$\Act$ is a finite set of action labels and 
$\delta \subseteq Q \x \Act \x Q \x \{-1,0,1\}^n$ is a finite transition relation.
It induces an infinite-state labeled transition system over the
stateset $Q\x\N^n$ as follows.
We have $(p,C) \step{a} (p',C+D)$ iff 
there is some $(p,a,p',D) \in \delta$ s.t.\ $C+D \in \N^n$.

A VASS of dimension $1$ corresponds to an OCN.
\end{definition}

{\bf\noindent Problems about energy games.}
Previous works on energy games (on a finite game graph)
mainly considered the following two problems
\cite{Chatterjee:FSTTCS2010,Chatterjee-Doyen:TCS2012,CRR:CONCUR2012,AMSS:CONCUR2013}.

In the {\em fixed initial credit problem}, one considers 
a starting configuration with a fixed initial energy.
The question is whether Player $1$ has a winning strategy in the energy game,
starting from this configuration with the given amount of energy.

In the {\em unknown initial credit problem} one instead asks whether
there {\em exists} some level of initial energy s.t.\ Player $1$ can win the game.
Even if the answer to this question is
positive, it does not necessarily yield any information about the minimal initial
energy required to win the game.

%
{\bf\noindent Outline of the results.}
In the following section we show
that there is a general connection between energy games and simulation 
games (i.e., checking simulation preorder between transition graphs). 
Pushdown energy games of energy dimension $n$ are logspace inter-reducible
with simulation games between a pushdown automaton and an $n$-dimensional
vector addition system with states (VASS). A similar result holds for one-counter
energy games and simulation between OCA and $n$-dimensional VASS.

Using this connection, we prove several decidability results for energy games
on infinite game graphs.
\begin{enumerate}
\item
For one-counter energy games of energy dimension $n=1$
the winning sets are semilinear and the fixed initial credit problem
is decidable.
\item
The fixed initial credit problem is undecidable for
one-counter energy games of energy dimension $n\ge 2$.
\item
Both the fixed and the unknown initial credit problem are undecidable
for pushdown energy games, even for energy dimension $n=1$.
\end{enumerate}

\section{Energy Games vs.\ Simulation Games}\label{sec:simulation}

Simulation is a semantic preorder in van Glabbeeks
linear time -- branching time spectrum \cite{Gla2001}.
It is used to compare the behavior of processes
and is defined as follows.

\begin{definition}[Simulation]\label{def:sim}
Given two labeled transition systems $T$ and $T'$,
a relation $R$ on the disjoint union of the sets of states of $T$
and $T'$ is a \emph{simulation} if for every 
pair of states $(c, c') \in R$ and every step $c \step{a} d$
there exists a step $c' \step{a} d'$ such that $(d,d') \in R$.

Simulations are closed under union. So there exists a unique maximal simulation
$\SIM{}{T,T'}$ which is a preorder, commonly called \emph{simulation preorder}.
We drop the index whenever it is clear from the context and say
that $c'$ \emph{simulates} $c$ iff $c\ssim c'$.
By simulation between $M$ and $M'$ or w.r.t.~$M,M'$
we mean the maximal simulation $\SIM{}{T,T'}$ relative
to the transition systems $T$ and $T'$ which are induced by
$M$ and $M'$, respectively.

Simulation preorder can also be characterized in terms of an interactive game between two
players \R\ (Player $0$) and \V\ (Player $1$), 
where the latter tries to stepwise match the moves of the former.  
A \emph{play} is a finite or infinite sequence of pairs of 
states $(c_0,c'_0),(c_1,c'_1),\dots,(c_i,c'_i)\dots$ where the next pair
$(c_{i+1},c'_{i+1})$ is determined by a round of choices: First \R\ chooses a
transition $c_i\step{a}c_{i+1}$, then \V\ responds by choosing an equally
labeled transition $c'_i\step{a}c'_{i+1}$.
If one of the players cannot move then the other wins,
and \V\ wins every infinite play.
A \emph{strategy} is a set of rules that tells a player which valid move to choose.
A player plays according to a strategy if all his moves obey the rules of the strategy.
A strategy is \emph{winning} from $(c,c')$ if every play that starts in $(c,c')$
and which is played according to that strategy is winning.
We have $c\ssim c'$ iff \V\ has a winning strategy from $(c,c')$.
\end{definition}

First we show how energy games can be reduced to simulation games.

\begin{lemma}\label{lem:dec:reduction}
For any $n$-dimensional pushdown energy game $G = (Q_0, Q_1,\Gamma,\delta,n)$
one can in logspace construct a pushdown automaton $A=(Q_0 \cup Q_1 \cup Q_A,\Gamma,\Act,\delta_A)$
and a $n$-dimensional VASS $V = (Q_0 \cup Q_1 \cup Q_V,\Act,\delta_V)$ 
s.t., for every $q \in Q_0 \cup Q_1$, $\gamma \in \Gamma^+$ and $E \in \N^n$, 
Player $1$ wins the energy game from configuration $(q,\gamma,E)$ iff
$(q,\gamma) \ssim (q,E)$.

Moreover, in the special case of a one-counter energy game, the constructed automaton $A$ is a OCA.
\end{lemma}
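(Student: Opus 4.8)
The plan is to construct the simulation game so that Player~$1$ (the Duplicator \V) in the simulation exactly mirrors Player~$1$ in the energy game, while the VASS counters $\N^n$ track the energy vector $E$. The key design idea is that the energy game's winning condition ``some component drops below $0$'' should correspond precisely to ``\V\ gets stuck because the VASS cannot make a move,'' since a VASS of dimension $n$ forbids any step that would take a counter negative. Concretely, I would let $A$ simulate the pushdown structure of $G$ (stack operations and control-state changes), and let $V$ be an $n$-dimensional VASS whose control-states also range over $Q_0 \cup Q_1$ (plus auxiliary states $Q_A, Q_V$) and whose counter-updates replay the energy changes $C \in \{-1,0,1\}^n$ attached to each transition. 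The configuration $(q,E)$ of $V$ then holds the current energy in its counters.

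First I would fix the action alphabet $\Act$ so that each transition $(q,X,q',\beta,C)\in\delta$ gets a distinct label (or enough labels to encode the choice), and arrange the turn structure. The subtlety is that simulation games always have \R\ (Spoiler, Player~$0$) move first on the $A$-side and \V\ respond on the $V$-side, whereas in the energy game the owner of the configuration (either player) chooses. To reconcile this, I would use a two-phase gadget per round via the auxiliary states $Q_A$ and $Q_V$: for configurations owned by energy-game Player~$0$, \R\ should get to pick the transition and \V\ is forced to follow it; for configurations owned by Player~$1$, the roles must be swapped so that the choice of transition is effectively controlled by \V. The standard trick is to let \R\ propose on $A$ and \V\ confirm on $V$ when Player~$0$ owns the state, and conversely to let \R's move be ``blind'' (a single forced action) while \V\ resolves the real nondeterminism when Player~$1$ owns the state, routing through the intermediate states in $Q_A\cup Q_V$. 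I would make the labels line up exactly so that \V\ can match a step iff the corresponding energy-game move is legal \emph{and} keeps $E+C\in\N^n$.

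The heart of the correctness argument is then a two-directional induction establishing that $(q,\gamma)\ssim(q,E)$ iff Player~$1$ wins the energy game from $(q,\gamma,E)$. For the ``if'' direction I would take a winning strategy for energy-game Player~$1$ and translate it into a \V-strategy: whenever \R\ forces a move, \V\ replays the matching energy-game transition on the VASS side, which is always possible because Player~$1$'s winning strategy guarantees the energy stays non-negative, hence the counter update $E\mapsto E+C$ never leaves $\N^n$, so \V\ never gets stuck; and any infinite play corresponds to an infinite (winning) energy play. For the ``only if'' direction I would contrapose: a strategy for energy-game Player~$0$ that drives some component below $0$ translates into a \R-strategy that forces \V\ into a configuration where the required VASS step would make a counter negative, so \V\ cannot respond and loses. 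The two phases of the gadget must be verified to preserve simulation equivalence, i.e.\ the auxiliary states introduce no spurious winning or losing opportunities for either player.

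The main obstacle I expect is the turn-alignment between the fixed Spoiler-then-Duplicator discipline of simulation and the mixed ownership $Q_0\cup Q_1$ of the energy game, handled cleanly by the $Q_A,Q_V$ gadgets; getting these gadgets to be label-consistent (so that \V\ has exactly one legal response that encodes the intended transition, and no others) is the delicate bookkeeping. A secondary point is ensuring the reduction stays in logspace, which holds because the construction is a local, transition-by-transition rewriting with only a constant number of auxiliary states per original transition. Finally, for the one-counter case I would observe that the pushdown automaton $A$ produced by this construction uses only the single stack behaviour inherited from the one-counter structure (one genuine stack symbol plus bottom), so $A$ is in fact an OCA, giving the last sentence of the statement essentially for free once the general construction is in place.
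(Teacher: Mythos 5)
Your architecture coincides with the paper's: the VASS counters carry the energy so that a move driving some component negative is exactly a blocked VASS move (leaving \V\ stuck), per-transition action labels force \V\ to copy \R's choice at $Q_0$-states, a two-round gadget through auxiliary states handles $Q_1$-states, and the correctness, logspace and OCA-preservation arguments are the same. The problem is that the role-reversal gadget, which you defer to ``delicate bookkeeping,'' is the actual content of the lemma, and the one design decision you do commit to there fails. You make \R's dummy move at a $Q_1$-state \emph{blind}, ``a single forced action.'' The VASS cannot see the stack, so if that dummy move carries one label independent of the top-of-stack, then \V's subsequent choice of transition cannot be confined to transitions that are enabled in the energy game: \V\ may select a transition $t$ whose required top symbol $X_t$ differs from the actual top symbol $X$. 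Whatever deviation-handling you then install breaks one direction or the other: if \V\ can punish \R\ for not executing the chosen $t$ (as he must be able to, see below), then after choosing a disabled $t$, \R\ \emph{cannot} execute it, so \V\ wins the simulation game from positions where Player $1$ loses the energy game; if \V\ cannot punish deviations, \R\ wins everywhere. The paper's fix is that \R's forced dummy move announces the top stack symbol via a label $a_X$ (one per $X\in\Gamma$), and the $V$-side transitions labelled $a_X$ out of $q$ correspond exactly to the energy-game transitions enabled at control-state $q$ with top symbol $X$; this announcement is how the stack information crosses over to the stack-blind VASS.

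Second, you never say how \R\ is forced, in the second round, to implement the transition that \V\ chose, and this cannot be left implicit: in a simulation game \R\ always moves first, so without an enforcement mechanism she simply plays some other enabled transition $t'\neq t$, \V\ (who can only match the recorded $t$) is stuck, and \R\ wins even at states owned by Player $1$. The paper's mechanism is a universal state $u$ in the VASS, with a self-loop $(u,c,u,\vec{0})$ for every label $c$: from the recording state $q''_t$, every label other than $a_t$ leads to $u$, so any deviation by \R\ hands \V\ an immediate win, while the honest label $a_t$ leads to the correct target state with zero energy effect. Note also that the energy update $C_t$ must be applied on \V's \emph{first}-phase move (when he selects $t$), so that the VASS semantics blocks precisely the energy-violating choices; your text is consistent with this but never fixes it. With the $a_X$-announcement and the universal-state punishment added, your strategy-translation argument goes through exactly as in the paper.
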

\begin{proof}
Every step in the energy game on $G$ is emulated by either one or
two rounds of the simulation game between $A$ and $V$.
We maintain the invariant that a configuration $(q,\gamma,E)$ in the energy game
corresponds to a configuration $((q,\gamma),(q,E))$ in the simulation game.

For every transition $t = (q_t,X_t,q'_t,\beta_t,C_t) \in \delta$ and 
every stack symbol $X \in \Gamma$ we define unique action labels
$a_t, a_X \in \Act$. Moreover, we add symbol $a$ to $\Act$.
So $\Act = \{a_t, a_X\,|\, t \in \delta, X \in \Gamma\} \cup \{a\}$.

There are two cases, depending on which player chooses the step 
in the energy game from the current configuration $(q,\gamma,E)$, 
i.e., whether $q \in Q_0$ or $q \in Q_1$.

For $q \in Q_0$ and every transition $t = (q_t,X_t,q'_t,\beta_t,C_t) \in \delta$ with $q_t =q$
we add a transition $(q_t,X_t,a_t,q'_t,\beta_t)$ to $\delta_A$
and a transition $(q_t,a_t,q'_t,C_t)$ to $\delta_V$. Since the label $a_t$ is unique to the
transition $t$, Player $1$ has no choice in the simulation game but to implement
the effect of the transition $t$ chosen by Player $0$, and the invariant is
preserved.

For every $q \in Q_1$ we add a copied auxiliary control-state $\hat{q}$ to $Q_A$,
and we add transitions $(q,X,a_X,\hat{q},X)$ to $\delta_A$ for every $X \in \Gamma$.
I.e., in the simulation game Player $0$ makes a dummy-move that announces the current top
stack symbol $X$ via the action symbol $a_X$. The choice of the next
encoded transition $t$ (among those that are currently enabled in the energy game)
is made by Player $1$ in his next move.

For every transition $t = (q_t,X_t,q'_t,\beta_t,C_t) \in \delta$ with $q_t=q \in Q_1$ 
and $X_t = X \in \Gamma$ 
we add a transition
$(q_t,a_X,q''_t,C_t)$ to $\delta_V$ where $q''_t$ is a new auxiliary control-state that is
added to $Q_V$. I.e., Player $1$ gets to choose a transition $t$ from the current encoded
control-state of the energy game. Since he needs to use the same symbol $a_X$
as in the previous move by Player $0$, his choices are limited to transitions that are currently
enabled at control-state $q$ and top stack symbol $X$ in the energy game.
This choice of transition $t$ by Player $1$ is recorded
in the new control-state $q''_t$. Transitions that would decrease the energy below zero
(and thus be losing for Player $1$ in the energy game) 
are disabled by the semantics of VASS. 

In the next step, Player $0$ will be forced to implement this chosen transition 
$t$, or else she loses the game.
For every 
transition $t = (q_t,X_t,q'_t,\beta_t,C_t) \in \delta$ with $q_t=q \in Q_1$
we add a transition $(\hat{q},X_t,a_t,q'_t,\beta_t)$ to $\delta_A$.
This emulates a transition $t$ of the energy game and announces its unique
identity via the action symbol $a_t$. It remains to check whether this 
transition was the same as the one chosen by Player $1$ in the previous round.
(If not, then Player $0$ must lose the game.)

For every 
transition $t = (q_t,X_t,q'_t,\beta_t,C_t) \in \delta$ with $q_t=q \in Q_1$
we add a transition $(q''_t,a_t,q'_t,\vec{0})$ to $\delta_V$,
where $\vec{0}=\{0\}^n$ denotes the $n$-tuple with value $0$ on all
coordinates.
This simply implements the effect of the chosen transition $t$ in the case where Player $0$
has taken the correct transition with label $a_t$.
In the other case where Player $0$ did not choose the correct transition with label $a_t$
from state $\hat{q}$, we must ensure that Player $1$ wins the simulation game.
Thus we add transitions $(q''_t,b,u,\vec{0})$ to $\delta_V$
for every $b \neq a_t$ where $u$ is a universal winning state for Player $1$, i.e., 
we add a state $u$ to $Q_V$ and transitions $(u,c,u,\vec{0})$ to $\delta_V$
for every $c \in \Act$.

Thus, to avoid losing the simulation game, the players emulate the effect of 
an enabled energy game transition $t$ that was chosen by Player $1$ in two rounds of
the simulation game, and the invariant is preserved.

If Player $1$ has a strategy to win the energy game, then by faithful emulation
he also wins the simulation game. Since the encoded energy never drops below zero,
the corresponding transitions in the VASS are never blocked by the boundary condition
and the play of the simulation game is infinite.
Otherwise, if Player $0$ has a winning strategy in the energy game, then she 
can enforce that some dimension of the energy becomes negative.
By faithful emulation Player $0$ also wins the simulation game, since the steps 
that go below zero are blocked in the VASS where Player $1$ plays.  

Thus, for every $q \in Q_0 \cup Q_1$, 
Player $1$ wins the energy game from configuration $(q,\gamma,E)$ iff
he wins the simulation game from $((q,\gamma),(q,E))$ iff
$(q,\gamma) \ssim (q,E)$.

We observe that the above construction preserves the property that makes the pushdown automaton correspond
to an OCA. If there is only one stack symbol plus a non-removable stack bottom symbol in the 
pushdown energy game $G$ then the same property also hold for the constructed pushdown automaton $A$. 
Thus, one-counter energy games reduce to simulation games between OCA and VASS.
\end{proof}

For the reverse direction we show how to reduce simulation games to energy games.

\begin{lemma}\label{lem:dec:reverse-reduction}
For a pushdown automaton $A=(Q_A,\Gamma,\Act,\delta_A)$
and a $n$-dimensional VASS $V = (Q_V,\Act,\delta_V)$,
one can in logspace construct a 
$n$-dimensional pushdown energy game $G = (Q_0, Q_1, \Gamma,\delta,n)$
with $Q_0 = Q_A \x Q_V \x \{0\}$ and $Q_1 = Q_A \x Q_V \x \Act$
s.t., for every $q_0 \in Q_A$, $q_1 \in Q_V$, $\gamma \in \Gamma^+$ and $E \in \N^n$, 
we have $(q_0,\gamma) \ssim (q_1,E)$ iff
Player $1$ wins the energy game from configuration $((q_0,q_1,0),\gamma,E)$.

Moreover, if $A$ is a OCA then the constructed game $G$ is a one-counter energy game.
\end{lemma}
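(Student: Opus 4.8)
The plan is to invert the construction of \cref{lem:dec:reduction}, maintaining throughout the invariant that a position $((q_0,\gamma),(q_1,E))$ of the simulation game corresponds to the energy-game configuration $((q_0,q_1,0),\gamma,E)$, where $E\in\N^n$. Each round of the simulation game --- in which \R\ first picks an $a$-move of $A$ and \V\ then answers with an $a$-move of $V$ --- is emulated by two steps of the energy game, the first owned by Player $0$ (the \R-move) and the second by Player $1$ (the \V-answer). The auxiliary third component of the control-states records, between the two steps, both the action $a$ just played and the updated $A$-state, so that Player $1$'s admissible answers are exactly the $V$-transitions labelled $a$.

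Concretely, from a main state $(q_0,q_1,0)$ with top stack symbol $X$, for every $A$-transition $(q_0,X,a,q_0',\beta)\in\delta_A$ I add the Player-$0$ transition $((q_0,q_1,0),X,(q_0',q_1,a),\beta,\vec{0})$: it rewrites the stack exactly as $A$ does and leaves the energy unchanged. From the resulting Player-$1$ state $(q_0',q_1,a)$, for every $V$-transition $(q_1,a,q_1',D)\in\delta_V$ and every $Y\in\Gamma$ I add $((q_0',q_1,a),Y,(q_0',q_1',0),Y,D)$: this leaves the stack untouched and applies the vector effect $D$. If $E+D\in\N^n$ this restores the invariant at $(q_0',q_1',0)$; if some component of $E+D$ is negative, Player $0$ wins at once. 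The latter step encodes a $V$-transition that is disabled in the VASS, and allowing Player $1$ to take it (and lose) rather than disabling it does not change optimal play.

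Two stuck situations must be handled inside the prescribed state set $Q_A\x Q_V\x(\{0\}\cup\Act)$, since the energy game requires every configuration to have a successor. If \R\ is stuck, i.e.\ $q_0$ has no $A$-transition for top symbol $X$, I equip $(q_0,q_1,0)$ with an energy-preserving self-loop; being forced to loop forever yields an infinite play won by Player $1$, matching that \V\ wins when \R\ cannot move. Symmetrically, if \V\ is stuck, i.e.\ $q_1$ has no $a$-labelled $V$-transition, I equip the Player-$1$ state $(q_0',q_1,a)$ with a self-loop of effect $(-1,0,\dots,0)$ (here $n\ge 1$); being forced to drain the first energy component, Player $1$ eventually drives it below zero and loses, matching that \R\ wins when \V\ cannot answer. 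These loops are added only when the corresponding player has no genuine move, so they never hand a player a spurious escape.

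Correctness then follows by translating strategies along the invariant. If \V\ wins the simulation game, Player $1$ copies the answers of \V\ in every second step; each answer is an enabled $V$-transition, so the energy stays in $\N^n$ and the infinite play is won by Player $1$. Conversely, if Player $1$ wins the energy game, then from every reachable Player-$1$ state his winning move must keep the energy non-negative and hence cannot be the draining loop; it is therefore a genuine enabled $a$-labelled $V$-transition, which \V\ replays to win the simulation game. I expect this last point to be the main obstacle: one must argue that a winning Player $1$ neither needs, nor under the totality convention is forced into, an energy-dropping VASS step or the draining gadget, so that the energy-game strategy projects onto a legal Duplicator strategy. Finally, since the step-$1$ transitions copy the stack behaviour of $A$ verbatim and the step-$2$ transitions never alter the stack, the stack alphabet of $G$ equals that of $A$; in particular, if $A$ is an OCA then $G$ is a one-counter energy game.
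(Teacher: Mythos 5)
Your construction is essentially identical to the paper's: the same two-step emulation with $Q_0 = Q_A \x Q_V \x \{0\}$ and $Q_1 = Q_A \x Q_V \x \Act$, zero-effect Player-$0$ moves copying $A$'s stack behaviour, $D$-effect Player-$1$ moves ignoring the stack, self-loops ($\vec{0}$ for Player $0$, energy-draining for Player $1$) to handle deadlocks, and the same strategy-transfer correctness argument; the only cosmetic differences are that the paper adds the self-loops unconditionally at every state (harmless, since they never benefit their owner) and drains all energy coordinates rather than just the first. Your proposal is correct.
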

\begin{proof}
Every round of the simulation game is emulated by two steps in the energy game,
one step by Player $0$ followed by one step by Player $1$, and the stated invariant
will be preserved.

For every transition $(q_0,X,a,q_0',\beta) \in \delta_A$
and every state $q_1 \in Q_V$ we add a transition
$((q_0,q_1,0),X,(q_0',q_1,a),\beta,\vec{0})$ to $\delta$.
Since $(q_0,q_1,0) \in Q_0$, Player $0$ gets to chose this move in the energy game.
This move in the energy game does not change the energy, but 
it is enabled iff the corresponding move is enabled in the pushdown automaton and
it has the same effect on the stack. 
The new control-state $(q_0',q_1,a)$ records the new state $q_0'$ and the symbol $a \in \Act$,
which forces Player $1$ to emulate an $a$-move of the VASS in the next step.
Since $(q_0',q_1,a) \in Q_V$, Player $1$ chooses the next step.
For every transition $(q_1,a,q_1',D) \in \delta_V$, $q_0' \in Q_A$, $a \in \Act$
and $X \in \Gamma$, we add a transition 
$((q_0',q_1,a),X,(q_0',q_1',0),X,D)$ to $\delta$.
This move is enabled regardless of the stack content, but it must match the recorded 
control-state and action symbol. Its effect $D$ on the energy implements the changes
in the VASS. Unlike in the VASS, moves that decrease a counter below zero are not blocked
in the energy game, but they are losing for Player $1$. Thus, since Player $1$ chooses the
moves that affect the energy, he will avoid any move that is disabled in the VASS.
After this move we have emulated one round of the simulation game, the control-state is 
in $Q_0$ again and the invariant is maintained.

Finally, to ensure deadlock-freedom in the energy game, we add
transitions $(q,X,q,X,\vec{0})$ for every $q \in Q_0$ and $X \in \Gamma$
and transitions $(q,X,q,X,(-1,\dots,-1))$ for every $q \in Q_1$ and $X \in \Gamma$.
I.e., if Player $0$ was deadlocked in the pushdown automaton 
then she will loop forever in the energy game
without decreasing the energy, and thus Player $1$ wins.
If Player $1$ is deadlocked in the VASS, then the only available moves in the energy game
repeatedly decrease the energy until Player $1$ loses the game.

If $(q_0,\gamma) \ssim (q_1,E)$ then Player $1$ has a winning strategy in the 
simulation game. By the semantics of VASS he can continue forever without going below
zero in any of the VASS counters. By using the same strategy in the emulating energy game
he can continue forever without running out of energy and thus wins the energy game
from configuration $((q_0,q_1,0),\gamma,E)$.
Conversely, if $(q_0,\gamma) \not\ssim (q_1,E)$ then Player $0$ has a winning strategy in the 
simulation game which eventually leads to a configuration where Player $1$ is blocked.
By using the same strategy in the corresponding energy game,
eventually a configuration is reached where only energy decreasing moves remain available to Player $1$
and he loses the energy game.

We observe that the above construction preserves the property that makes the pushdown automaton correspond
to an OCA. If there is only one stack symbol plus a non-removable stack bottom symbol in the 
pushdown automaton $A$ 
then the same property also hold for the constructed pushdown energy game $G$.
Thus, simulation games between OCA and VASS reduce to one-counter energy games.
\end{proof}

\section{The main decidability result}\label{sec:decidability}

Here we consider 1-dimensional one-counter energy games. We show that
the fixed initial credit problem is decidable and that the winning sets are
semilinear. The proof first shows a corresponding result for
simulation preorder between a OCA and a OCN, and then applies 
the connection between simulation games
and energy games from Section~\ref{sec:simulation}.

The cornerstone in our argument is the following property of simulation
preorder (shown below).

\begin{theorem}\label{thm:dec:sim-is-slin}
    Simulation preorder $\SIM{}{A,A'}$
    between a given one-counter automaton
    $A$ and a one-counter net $A'$ is semilinear.
\end{theorem}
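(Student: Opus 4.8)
The plan is to prove semilinearity one pair of control-states at a time. Since $A$ and $A'$ have only finitely many control-states and semilinear sets are closed under finite union (and under tagging by a fixed pair of control-states inside $Q_A\x\N\x Q_{A'}\x\N$), it suffices to show that for every fixed pair $(p,p')$ the ``slice''
\[
  S_{p,p'}\;=\;\set{(m,m')\in\N^2 \;:\; pm\ssim p'm'}
\]
is a semilinear subset of $\N^2$; equivalently, by Ginsburg--Spanier, that it is Presburger-definable. The whole relation $\SIM{}{A,A'}$ is then the finite union of the $S_{p,p'}$, tagged by their control-states.

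First I would isolate the structural feature that makes the simulating side tractable: monotonicity in the \OCN{} counter. Because $A'$ is an \OCN{} and hence has no zero tests, a larger counter can only help \V. Concretely, I would verify that $\set{(pm,p'\ell) : \ell\ge m' \text{ for some } m' \text{ with } pm\ssim p'm'}$ is itself a simulation: any response $p'm'\step{a}p'(m'+d)$ is also available from $p'\ell$ (the move only requires a non-negative target, and $\ell+d\ge m'+d\ge 0$), and it reaches a counter $\ell+d\ge m'+d$, preserving the invariant. Hence each slice $S_{p,p'}$ is upward closed in its second coordinate and is therefore fully described by a threshold function $f_{p,p'}(m)=\min\set{m'\in\N : pm\ssim p'm'}\in\N\cup\set{\infty}$. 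Semilinearity of $S_{p,p'}$ now reduces to showing that this staircase is \emph{eventually affine and periodic}: that there are a period $\rho\ge 1$ and an offset $\sigma$ with $f_{p,p'}(m+\rho)=f_{p,p'}(m)+\sigma$ above some bound (the value $\infty$ also recurring periodically), plus a semilinear description of the finitely many remaining low columns.

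The core of the argument, and the step I expect to be the main obstacle, is establishing this eventual periodicity. Here I would exploit that all counter updates lie in $\set{-1,0,1}$, so each round changes each counter by a bounded amount; this already constrains $f_{p,p'}$ to grow at most linearly and bounds the relevant slopes. I would then separate two regimes. In the \emph{low} regime, where \R's counter $m$ is small, the zero tests of the \OCA{} $A$ can genuinely fire, but this concerns only finitely many columns, each contributing a single (possibly infinite) threshold value, hence a semilinear piece. In the \emph{high} regime, where both counters are large, a zero test of $A$ is too far away to be reached before the counters would have to be driven down, so $A$ behaves locally like an \OCN; here I would run a belt-type pumping argument in the spirit of the classical one-counter-net analyses of Jan\v{c}ar--Ku\v{c}era--Moller. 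The first quadrant decomposes into finitely many belts around rational-slope lines plus a background, and on each belt the winner of the simulation game depends only periodically on the counters. Combining this periodicity with the upward closure from the monotonicity lemma pins $f_{p,p'}$ down to an eventually affine-periodic function.

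The delicate points to get right in the core step are (i) proving that a winning strategy for \V{} at $(pm,p'm')$ can be shifted to $(p(m+\rho),p'(m'+\sigma))$ and back for suitable $\rho,\sigma$, which is where the monotonicity lemma together with a pigeonhole over control-states along a pumped play is used, and (ii) showing that \R's ability to steer the \OCA{} counter down to a zero test does not destroy periodicity in the high regime --- intuitively because reaching a zero test from a large counter requires a long, geometrically constrained descent that is itself captured by the belt decomposition. Once $f_{p,p'}$ is shown eventually affine-periodic, each $S_{p,p'}$ is a finite union of linear sets, and therefore $\SIM{}{A,A'}$ is semilinear.
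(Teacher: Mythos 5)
Your opening moves (per-pair slices, monotonicity in the \OCN{} coordinate, threshold functions $f_{p,p'}$, splitting off finitely many low columns where zero tests fire) coincide with the paper's setup, but the core step of your argument contains a genuine gap. You justify periodicity in the high regime by claiming that there "a zero test of $A$ is too far away to be reached before the counters would have to be driven down", so that $A$ behaves locally like an \OCN{}. This reasoning is incorrect: the simulation game has infinite duration, so from any column, however high, \R\ can steer the \OCA{} counter all the way down to zero and only then exploit her zero-test transitions. The winner at a high column therefore genuinely depends on the colouring at the zero boundary, and any correct proof must quantify how that boundary data propagates upward --- this is the crux of the theorem (recall that with zero tests on \emph{both} sides simulation is undecidable, so no argument that treats zero tests as asymptotically negligible can be sound). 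Your "delicate point (ii)" names the obstacle but resolves it only by asserting that the descent "is itself captured by the belt decomposition"; the belt machinery of Jan\v{c}ar--Ku\v{c}era--Moller and Hofman--Lasota--Mayr--Totzke is developed for nets \emph{without} zero tests, so this assertion assumes exactly what has to be proved. Two secondary problems: the target statement $f_{p,p'}(m+\rho)=f_{p,p'}(m)+\sigma$ with a single pair $(\rho,\sigma)$ is stronger than semilinearity requires and is unjustified here, since without monotonicity in $m$ (which fails for a zero-testing left side) different residue classes modulo $\rho$ may have different asymptotic slopes; and bounded counter updates do not "already" give at most linear growth of thresholds --- that, too, is a consequence of the structure theorem, not an available ingredient.

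It is instructive to compare with how the paper gets around precisely this difficulty, using two ideas absent from your sketch. First, it never proves periodicity of the \emph{finite} thresholds directly: using Dickson's lemma and a strategy-transfer argument built on $l$-safe strategies (\cref{lem:dec:safe}, \cref{lem:dec:regularity}), it establishes only that the all-black columns repeat with some period $K$ above some level $l$, together with the inequality $W_{p,p'}(i)\le W_{p,p'}(i+K)$. Second, it encodes the boundary data --- the threshold values $W_{p,p'}(l)$ and the black-column pattern, with \R's counter tracked modulo $K$ in the control state --- into a pair of one-counter nets $B,B'$ (\cref{lem:dec:oca-to-ocn}), and then invokes the known effective semilinearity of \OCN{}/\OCN{} simulation (\cref{thm:dec:ocn-ocn-slin}) as a black box, rather than re-deriving any belt structure. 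If you want to rescue your direct route, the statement you must actually prove is of this regularity type; as written, the proposal begs the question at its decisive step.
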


It immediately yields the decidability of simulation preorder.

\begin{theorem}\label{thm:sim-OCA-OCB-dec}
Simulation preorder between a OCA and a OCN is decidable.
\end{theorem}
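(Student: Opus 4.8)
The plan is to derive decidability from the semilinearity established in \cref{thm:dec:sim-is-slin} by running, in parallel (dovetailed), two complementary semi-decision procedures: one that halts precisely when $c \ssim c'$ holds, and one that halts precisely when it fails. Since for any configurations $c$ of the OCA $A$ and $c'$ of the OCN $A'$ exactly one of these two situations occurs, the combined procedure always terminates and thereby decides $c \ssim c'$.

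For the positive direction I would exploit that \cref{thm:dec:sim-is-slin} guarantees the maximal simulation $\SIM{}{A,A'}$ to be semilinear, hence Presburger-definable, on the configuration space. Because the control-state sets are finite, a relation between configurations of $A$ and $A'$ is a finite family of subsets of $\N \times \N$ indexed by pairs of control-states, and semilinearity means each such subset is Presburger-definable. The procedure then enumerates all Presburger formulas $\varphi$ and, for each, uses decidability of Presburger arithmetic to test (i) that the relation $R_\varphi$ defined by $\varphi$ is a simulation, and (ii) that the queried pair $(c,c')$ lies in $R_\varphi$. The crucial observation is that being a simulation is itself Presburger-expressible: the transition relations of an OCA and an OCN (counter increments in $\{-1,0,1\}$ together with the zero-test case of $\delta_0$, cf.\ \cref{def:oca-ocn}) are Presburger-definable, so the condition ``for all $(x,y)\in R_\varphi$ and every move $x \step{a} x'$ there is a move $y \step{a} y'$ with $(x',y') \in R_\varphi$'' unfolds, over the finitely many action labels, into a single Presburger sentence. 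Any $R_\varphi$ passing test (i) is a simulation, hence contained in the maximal one, so passing (ii) as well certifies $c \ssim c'$; conversely, if $c \ssim c'$ then $\SIM{}{A,A'}$ is itself semilinear and is therefore eventually enumerated and passes both tests, so the search halts.

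For the negative direction I would unfold the game tree. Both $A$ and $A'$ are finitely branching, so the simulation game from $(c,c')$ is image-finite; consequently the maximal simulation coincides with the intersection $\bigcap_{k<\omega} R_k$ of the stratified approximants, where $R_k$ collects the pairs from which Duplicator can survive at least $k$ rounds. Hence $c \not\ssim c'$ holds iff $(c,c') \notin R_k$ for some finite $k$, i.e.\ iff Spoiler can force a Duplicator-stuck configuration within $k$ rounds. Each membership $(c,c') \in R_k$ is decidable by a finite backward computation over the bounded-depth, finitely-branching game tree rooted at $(c,c')$, so enumerating $k = 1,2,\dots$ yields a procedure that halts exactly when $c \not\ssim c'$.

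The genuinely hard content of the result lies entirely in \cref{thm:dec:sim-is-slin}, which I assume. Given it, the remaining obstacles are routine but demand care: I must verify that the simulation condition is faithfully captured by a single Presburger sentence (so that test (i) is legitimately decidable), and that image-finiteness of OCA and OCN indeed reduces simulation to its $\omega$-approximation (so that a \emph{finite} $k$ witnesses every non-simulation). Both points are standard, and together the two semi-procedures immediately yield decidability.
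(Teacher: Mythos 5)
Your proposal is correct and follows essentially the same route as the paper: dovetailing a positive semi-decision procedure that enumerates semilinear (Presburger-definable) relations and checks, via decidability of Presburger arithmetic, whether each is a simulation containing the queried pair (sound since any simulation is contained in the maximal one, complete by \cref{thm:dec:sim-is-slin}), with the standard negative semi-decision procedure for finitely-branching systems. Your write-up merely spells out the two routine facts the paper leaves implicit — that the simulation condition is expressible as a single Presburger sentence, and that image-finiteness reduces non-simulation to a finite-round witness in the game tree.
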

\begin{proof}
We use a combination of two semi-decision procedures.
Since OCA/OCN define finitely-branching processes,
we can apply a standard result that \emph{non}-simulation
is semidecidable.
The semi-decision procedure for simulation works as follows.
By \cref{thm:dec:sim-is-slin}, it suffices to enumerate semilinear sets
and to check for each such set whether it is a simulation relation that moreover contains the given
pair of processes. This check is effective by the definition of the simulation
condition and the decidability of Presburger arithmetic.
\end{proof}

%
%
%

By the connection between simulation games
and energy games from Section~\ref{sec:simulation}, we obtain our main result.

\begin{theorem}\label{thm:dec:1D-oca-eg}
The fixed initial credit problem for 1-dimensional one-counter energy games
is decidable, and the winning sets are semilinear.
\end{theorem}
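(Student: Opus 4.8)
The plan is to combine the two reductions established in Section~\ref{sec:simulation} with the semilinearity result of \cref{thm:dec:sim-is-slin}. The final statement is really a corollary of the machinery already in place, so the work is in assembling the pieces correctly rather than proving anything substantially new.

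First I would invoke \cref{lem:dec:reduction} in the one-dimensional one-counter case. By that lemma, for any $1$-dimensional one-counter energy game $G$ we can construct, in logspace, a one-counter automaton $A$ and a $1$-dimensional VASS $V$ such that Player~$1$ wins the energy game from $(q,m,E)$ exactly when $(q,m) \ssim (q,E)$ in the simulation game between $A$ and $V$. Here the key observation is that a $1$-dimensional VASS is precisely a one-counter net (OCN), as noted after \cref{def:VASS}. Thus the fixed initial credit problem for $G$ reduces to deciding a single instance of simulation preorder between an OCA and an OCN, which is decidable by \cref{thm:sim-OCA-OCB-dec}. This already yields decidability of the fixed initial credit problem.

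For the semilinearity of the winning sets, I would track the shape of the reduction more carefully. The winning set of the energy game is $\{(q,m,E) \mid \text{Player 1 wins from } (q,m,E)\}$, and by the reduction this equals $\{(q,m,E) \mid (q,m) \ssim (q,E)\}$, i.e.\ the restriction of the simulation relation $\SIM{}{A,V}$ to pairs sharing the same control-state $q$ and to the diagonal-style encoding of configurations. By \cref{thm:dec:sim-is-slin} the relation $\SIM{}{A,V}$ is semilinear as a subset of the combined counter space $\N \x \N$ indexed by the finitely many pairs of control-states. Since semilinear sets are closed under intersection with the (semilinear) constraint that selects matching control-states, and under the logspace-computable, piecewise-linear renaming induced by the reduction, the winning set of the original energy game is semilinear as well.

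The main obstacle, and the only genuinely nontrivial ingredient, is \cref{thm:dec:sim-is-slin} itself—establishing that simulation preorder between an OCA and an OCN is semilinear—but that is assumed here as the cornerstone result proved separately. Beyond invoking it, the remaining care is bookkeeping: verifying that the logspace reduction of \cref{lem:dec:reduction} maps configurations to configurations via an affine (indeed identity-on-counters) transformation, so that semilinearity is genuinely preserved rather than merely decidability. I expect this to be routine once the reduction's invariant $(q,m,E) \leftrightarrow ((q,m),(q,E))$ is spelled out explicitly over the one-dimensional counter spaces.
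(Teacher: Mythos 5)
Your proposal is correct and takes essentially the same route as the paper, whose entire proof is a one-line citation of \cref{thm:dec:sim-is-slin}, \cref{thm:sim-OCA-OCB-dec}, and \cref{lem:dec:reduction}. The extra bookkeeping you supply---that a $1$-dimensional VASS is an OCN and that the reduction's invariant $(q,m,E)\leftrightarrow((q,m),(q,E))$ is the identity on counters, so semilinearity is genuinely preserved---is exactly the detail the paper leaves implicit.
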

\begin{proof}
Directly by \cref{thm:dec:sim-is-slin}, \cref{thm:sim-OCA-OCB-dec}, 
\cref{lem:dec:reduction}.
\end{proof}

In the rest of this section we prove \cref{thm:dec:sim-is-slin}.
We fix a one-counter automaton
$A=(Q,\Act,\delta,\delta_0)$ and a one-counter net $A'=(Q',\Act,\delta')$.
Note that a slightly more general problem, where \emph{both} systems have zero-testing,
is no longer computable. Simulation preorder between two one-counter \emph{automata}
is undecidable \cite{JMS1999}.

In our construction, we will use a previous result about a 
special subcase of our problem, that the maximal simulation between two
one-counter \emph{nets} is effectively semilinear \cite{JKM2000,HLMT:FSTTCS2013}.
Ultimately, these positive results are due to the following monotonicity
properties.

\begin{proposition}[Monotonicity]\label{lem:dec:monotonicity}
    Let $p$ be a state of a OCA,
    $p',q'$ be states of a OCN and $m,m',n',l\in \N$. Then,
    \begin{enumerate}
        \item $p'm'\step{a}q'n'$ implies $p'(m'+l)\step{a}q'(n'+l)$,
        \item $p'm'\ssim p'(m'+l)$ and
        \item if $pm\ssim p'm'$ then $pm\ssim p'(m'+l)$.
    \end{enumerate}
\end{proposition}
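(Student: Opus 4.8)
The plan is to establish the three parts in order, since each subsequent one rests on its predecessor, and to exploit throughout the fact that the net $A'$ has no zero-testing transitions (its transition relation has no $\delta_0$-component). The only genuinely load-bearing observation is this absence of zero-tests on the $A'$ side; everything else is bookkeeping.

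For the first part, I would simply unfold the OCN transition semantics from \cref{def:oca-ocn}. A step $p'm' \step{a} q'n'$ in a net can only arise from a rule $(p',a,d,q') \in \delta'$ with $n' = m' + d \geq 0$; there is no zero-test case to consider. Adding $l \geq 0$ to the source counter gives $m' + l + d = n' + l$, which is still non-negative since $n' \geq 0$ and $l \geq 0$, so the same rule fires and yields exactly $q'(n'+l)$. This is precisely where the lack of zero-tests matters: a zero-test rule enabled at counter value $m'$ would in general be disabled at $m' + l$, so the statement would fail for a general OCA on the right.

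For the second part, I would exhibit an explicit simulation relation and verify the simulation condition of \cref{def:sim}. Take $R = \{\,(p'm',\, p'(m'+l)) : p' \in Q',\ m',l \in \N\,\}$, the set of pairs with equal control-state whose second counter dominates the first. To check that $R$ is a simulation, suppose $(p'm', p'(m'+l)) \in R$ and $p'm' \step{a} q'n'$. By the first part, $p'(m'+l) \step{a} q'(n'+l)$, and the resulting pair $(q'n', q'(n'+l))$ again lies in $R$. Hence $R$ is a simulation containing $(p'm', p'(m'+l))$, which yields $p'm' \ssim p'(m'+l)$.

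The third part then follows immediately from the second together with transitivity of simulation preorder. Since $\ssim$ is a preorder, from the hypothesis $pm \ssim p'm'$ and the already-established $p'm' \ssim p'(m'+l)$ we conclude $pm \ssim p'(m'+l)$. No real difficulty arises in this final step; the whole proposition is elementary once the first part is in place, and the single point that demands care is the systematic reliance on the net's lack of zero-tests, without which none of the three monotonicity statements would hold.
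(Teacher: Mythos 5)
Your proof is correct. The paper states this monotonicity proposition without any proof, treating it as a standard fact; your argument---unfolding the OCN semantics for part 1, exhibiting the explicit simulation relation $R = \set{(p'm',\, p'(m'+l))}$ for part 2, and invoking transitivity of the simulation preorder for part 3---is exactly the standard justification, and your emphasis on the absence of zero-tests on the $A'$ side correctly pinpoints the one place where the OCN hypothesis is essential.
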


Following \cite{JMS1999}, we interpret a binary relation
$R\subseteq (Q\x\N)\x(Q'\x\N)$
between the configurations of the processes of $A$ and $A'$ as 
a $2$-coloring of $|Q\x Q'|$ many planes $\N\x\N$,
one for every pair $(p,p')$ of control states.
The color of $(m,m')$ on the plane for $(p,p')$
is \Dcolour\ if $(pm,p'm')\in R$ and \Scolour\ otherwise.
We are particularly interested in the coloring of $\ssim$,
the largest simulation w.r.t.~$A$ and $A'$.

\newcommand{\colour}[3]{\C_{#1}(#2,#3)}
\newcommand{\Cline}[2]{\colour{#1}{#2}{-}}
\newcommand{\minsuff}[2]{W_{#1}(#2)}

\begin{definition}\label{def:dec:colouring}
    Consider the \emph{coloring} $\C$
    defined by
    $\colour{p,p'}{m}{m'} = \Dcolour$ iff $pm\ssim p'm'$.
    We write $\Cline{p,p'}{i}$ for the vertical line at level $i$
    on the plane for states $(p,p')$. That is,
    $\Cline{p,p'}{i}:\N\to\{\Dcolour,\Scolour\}$ with
    $\Cline{p,p'}{i}(n)=\colour{p,p'}{i}{n}$.
    We say that this line is \Scolour\ iff $\Cline{p,p'}{i}(n)=\Scolour$ for all $n\in\N$,
    i.e., if every point on the line is colored \Scolour.
    
    By monotonicity of simulation preorder (Proposition \ref{lem:dec:monotonicity}.3), for every line that is \emph{not}
    \Scolour\ there is a minimal value $\minsuff{p,p'}{i}$
    such that $\Cline{p,p'}{i}(n)=\Dcolour$ for all $n\ge\minsuff{p,p'}{i}$.
    We define $\minsuff{p,p'}{i}=\infty$ if no such value exists (the line is
    \Scolour)
    and write $\minsuff{}{i}$ for the maximal finite value $\minsuff{p,p'}{i}$
    over all pairs $(p,p')\in (Q\x Q')$.
\end{definition}

We show that the distribution of \Scolour\ lines in the coloring
of $\ssim$ follows a regular pattern.

\begin{definition}[Safe strategies]\label{def:dec:safe}
  Let $\sigma$ be a winning strategy for \R\ in the simulation game from position
  $(pm,p'm')$ for some $m,m',l \in \N$ such that $m\ge l$. 
  A strategy $\sigma$ is called \emph{$l$-safe},
  if whenever a play according to $\sigma$ reaches
  a position of the form $(ql,q'n')$ for the first time,
  then the line $\Cline{q,q'}{l}$ is \Scolour.
\end{definition}
%

\begin{lemma}\label{lem:dec:safe}
    Let $m\ge l\in\N$ such that $\Cline{p,p'}{m}$ is \Scolour.
    For all values $m'\in\N$,
    \R\ can win the simulation game from position $(pm,p'm')$ using a $l$-safe strategy.
\end{lemma}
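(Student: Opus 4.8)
The plan is to exhibit a single explicit \R-strategy, obtained by \emph{shadowing} a winning strategy that is played against an artificially enlarged \V-counter. Since $\Cline{p,p'}{m}$ is \Scolour, we have $pm\not\ssim p'n'$ for every $n'\in\N$, so \R\ wins the simulation game from $(pm,p'N)$ for \emph{every} $N\in\N$. Let $t=\minsuff{}{l}$ be the finite global bound from Definition~\ref{def:dec:colouring} (take $t=0$ if every line at level $l$ is \Scolour, in which case $l$-safety is automatic), set $N=m'+t$, and fix a winning \R-strategy $\sigma_N$ from $(pm,p'N)$.

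\R's strategy $\sigma$ from the real start $(pm,p'm')$ is to run $\sigma_N$ on a \emph{virtual} play whose \V-counter is always the real one shifted up by $t$: if the real position is $(qc,q'n')$ then the virtual one is $(qc,q'(n'+t))$, and \R\ plays whatever $\sigma_N$ prescribes there. Any move $q'n'\step{a}r'(n'+d)$ that \V\ plays in the real game is, by monotonicity (Proposition~\ref{lem:dec:monotonicity}.1), also available as $q'(n'+t)\step{a}r'(n'+t+d)$ in the virtual game, so the virtual play remains a legal $\sigma_N$-play with the offset $t$ preserved. Since $\sigma_N$ is winning, the virtual play reaches a position where \V\ is stuck; as the real \V-counter never exceeds the virtual one, every \V-move enabled in the real game is also enabled in the virtual game, so \V\ is stuck in the real game too and $\sigma$ is winning.

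For $l$-safety, suppose a $\sigma$-play first reaches real counter $l$ at a position $(ql,q'n')$. The virtual position is $(ql,q'(n'+t))$ with $n'+t\ge t=\minsuff{}{l}$, and, being a position of a play that follows the winning strategy $\sigma_N$, it is \R-winning, i.e.\ $ql\not\ssim q'(n'+t)$. Were $\Cline{q,q'}{l}$ not \Scolour, Definition~\ref{def:dec:colouring} would give $ql\ssim q'j$ for all $j\ge\minsuff{q,q'}{l}$, hence for $j=n'+t$ because $n'+t\ge\minsuff{}{l}\ge\minsuff{q,q'}{l}$, contradicting $ql\not\ssim q'(n'+t)$. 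So $\Cline{q,q'}{l}$ is \Scolour, which is exactly the $l$-safety requirement (and when $m=l$ the first visit is the start itself, where the line is \Scolour\ by hypothesis).

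The crux, and the step I would check most carefully, is the size of the offset: it must dominate the \emph{global} threshold $\minsuff{}{l}$, so that whichever pair $(q,q')$ is met at level $l$, the shifted \V-counter already lies in the \Dcolour\ part of any non-\Scolour\ line and would therefore refute \R's win there. This is precisely what rules out the naive attempt of taking an arbitrary winning strategy from $(pm,p'm')$, which may well reach level $l$ on a non-\Scolour\ line. The two facts that make the shadowing sound, namely that raising the \V-counter preserves the legality of \V's moves and that \V\ being stuck persists when the \V-counter is lowered, are both immediate from the OCN semantics and Proposition~\ref{lem:dec:monotonicity}.
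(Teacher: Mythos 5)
Your proof is correct and takes essentially the same approach as the paper's: both fix a winning Spoiler strategy from the shifted position $(pm,p'(m'+\minsuff{}{l}))$, shadow it from $(pm,p'm')$ while maintaining the offset $\minsuff{}{l}$ on \V's counter, and derive $l$-safety from the observation that a \Scolour\ point at height at least $\minsuff{}{l}$ on a level-$l$ line forces the whole line $\Cline{q,q'}{l}$ to be \Scolour. Your write-up is merely more explicit than the paper's about why the shadowed strategy remains winning (monotonicity of OCN steps in both directions) and about the degenerate case where every level-$l$ line is \Scolour.
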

\begin{proof}
  Fix any position $(pm,p'm')$. As the line $\Cline{p,p'}{m}$ is \Scolour, we have
  $pm\notSIM{}{}pn'$ for all $n'\in\N$. In particular, \R\ has a winning strategy
  $\sigma_l$ from position $(pm,p'(m'+\minsuff{}{l}))$.
  We see that \R\ may re-use this strategy also in the game that starts from position
  $(pm,p'm')$, maintaining an offset of $\minsuff{}{l}$ in her opponents counter value
  to the corresponding position in $\sigma_l$.
  Let $\sigma$ be the so constructed strategy
  for the game from $(pm,p'm')$.
  
  We argue that $\sigma$ is winning and $l$-safe.
  Indeed, let $(ql,q'n')$ be some position on a branch of $\sigma$.
  The same branch in $\sigma_l$ leads to position
  $(ql,q'(n'+\minsuff{}{l}))$. Clearly, $n'+\minsuff{}{l}\ge \minsuff{q,q'}{l}$ and the color
  of this point is \Scolour, because it is a position
  on a winning strategy for \R. Recall that \Scolour\ point on the line $\Cline{p,p'}{l}$ above $\minsuff{}{l}$ means that $\Cline{p,p'}{l}$ is \Scolour. 
  Together this means that the whole line $\Cline{q,q'}{l}$ must be \Scolour.
\end{proof}
%

\begin{lemma}\label{lem:dec:regularity}
    There exist $l,K\in\N$ such that
    for any pair $(p,p')\in (Q\x Q')$ of control-states
    and any $i\ge l$ it holds that
    \begin{enumerate}
        \item The line $\Cline{p,p'}{i}$ is \Scolour\ iff $\Cline{p,p'}{i+K}$ is
              \Scolour
        \item $\minsuff{p,p'}{i} \le \minsuff{p,p'}{i+K}$.
    \end{enumerate}
\end{lemma}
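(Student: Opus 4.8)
The plan is to show that, once the counter of $A$ is large enough, the coloring $\C$ restricted to each plane $(p,p')$ is governed by an ordinary net-versus-net simulation game, and then to read off both claims from the known semilinearity of the latter. First I would fix any $l\ge 1$ and form the one-counter net $\tilde A$ obtained from $A$ by deleting the zero-test transitions $\delta_0$. As long as $A$'s counter stays strictly positive, $A$ and $\tilde A$ induce the same transition system, so for counter values $\ge l\ge 1$ the simulation game on $A$ coincides move-for-move with the net game on $\tilde A$. The only role played by the zero-tests, and indeed by everything happening below level $l$, is to fix the outcome at the moment the counter first descends to level $l$: when a play first reaches a position $(ql,q'n')$, then by Monotonicity (Proposition~\ref{lem:dec:monotonicity}.3) together with the definition of $\minsuff{q,q'}{l}$ the winner is already determined, namely \V\ wins iff $n'\ge\minsuff{q,q'}{l}$ (and \R\ wins if that line is \Scolour).

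The key structural step is to turn this observation into an exact reduction: the coloring of $\{(i,n)\mid i\ge l\}$ on plane $(p,p')$ equals that of a net-versus-net game $G_l$ played between $\tilde A$ in the shifted coordinate $i-l$ and $A'$, equipped with a boundary condition at shifted level $0$ that hard-codes the finite profile $(\minsuff{q,q'}{l})_{q,q'}$ as immediate win/lose outcomes. One direction uses Monotonicity (Proposition~\ref{lem:dec:monotonicity}.3) to let \V\ survive across the boundary exactly when the profile permits; the other direction uses the safe-strategy lemma (\cref{lem:dec:safe}) to argue that \R\ can realize her wins while descending to level $l$ only onto \Scolour\ lines, so that the region game is genuinely self-contained. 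Since both $\tilde A$ and $A'$ are one-counter nets and the boundary profile is a fixed finite object, $G_l$ is (a gadget-encoding of) a net-versus-net simulation instance, whose maximal simulation is effectively semilinear \cite{JKM2000,HLMT:FSTTCS2013}. Hence the white region $S_{p,p'}=\{(i,n)\mid pi\ssim p'n,\ i\ge l\}$ is semilinear.

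From semilinearity of each $S_{p,p'}$ both claims follow by standard facts about semilinear (Presburger-definable) sets. The set of levels $i$ whose row in $S_{p,p'}$ is empty is the complement of a projection, hence an eventually periodic subset of $\N$; its period yields claim $1$. For the non-empty rows, $i\mapsto\minsuff{p,p'}{i}=\min\{n\mid(i,n)\in S_{p,p'}\}$ is a Presburger-definable function, and since each row is upward closed in $n$ (Proposition~\ref{lem:dec:monotonicity}.3) this function is, above $l$, affine with non-negative slope on each residue class of a suitable modulus $K$; therefore $\minsuff{p,p'}{i}\le\minsuff{p,p'}{i+K}$, which is claim $2$. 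Finally I would take $l$ to be the maximum, and $K$ a common multiple, of the finitely many per-plane parameters, to obtain a single pair $(l,K)$ valid for all $(p,p')\in Q\x Q'$.

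The main obstacle is the exact reduction of the second paragraph: making the boundary condition at level $l$ precise within the one-counter-net formalism. Nets cannot detect a specific counter value, so one must work in the shifted coordinate, where the boundary sits at $0$, and design gadgets that convert the finite profile $(\minsuff{q,q'}{l})_{q,q'}$ into matching or mismatching moves, while simultaneously proving that this faithfully reproduces the OCA game for both players --- \V\ via monotonicity and \R\ via $l$-safe strategies. Verifying this equivalence, rather than the subsequent extraction of periodicity and monotonicity from semilinearity, is where the real work lies.
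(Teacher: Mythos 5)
Your plan hits a genuine circularity, and it sits exactly at the step you flag as ``where the real work lies.'' The reduction you envision is essentially the construction the paper performs in \cref{lem:dec:oca-to-ocn} --- but the paper can only carry it out \emph{after} \cref{lem:dec:regularity} is proved, because the reduction consumes the lemma's conclusions as parameters. The obstruction is that your ``boundary condition at shifted level $0$'' presupposes a zero-test that one-counter nets cannot express: in an OCN, every transition is enabled either at all counter values or at all positive ones, so no gadget can be triggered \emph{exactly} when \R's counter reaches the boundary. The only workaround (and what the paper does) is to track \R's counter modulo some $K$ in the finite control and to offer the boundary gadgets at every level $\equiv 0 \pmod{K}$: chains of length $\minsuff{q,q'}{l}$ for the finite part of the profile, and winning $\$$-loops for the \Scolour\ lines. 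But the soundness of these always-available gadgets is precisely the lemma being proved: one needs blackness above $l$ to be periodic with period $K$ (claim 1) in order to specify the $\$$-loops by a finite rule at all, and one needs $\minsuff{q,q'}{l}\le\minsuff{q,q'}{l+n}$ whenever $K$ divides $n$ (claim 2) so that \R\ entering a chain away from the true boundary corresponds to a genuine win in the original game --- making the gadget available at \emph{all} levels instead would require $\minsuff{q,q'}{l}\le\minsuff{q,q'}{i}$ for every $i\ge l$, which there is no reason to expect (the lemma itself asserts monotonicity only along residue classes). Note also that $K$ has no independent origin anywhere in your plan: it is supposed to fall out of the semilinearity that the (circular) reduction was meant to deliver. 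Your final paragraph, extracting claims 1--2 from semilinearity, is fine as such; the flaw is upstream.

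The paper avoids all of this by proving the lemma with a self-contained combinatorial argument, and only afterwards doing the net-vs-net reduction. It considers the patterns $Pat_i:Q\x Q'\to\{\Scolour,\Dcolour\}$ of black lines at each level; since there are finitely many patterns, some pattern recurs along an infinite set of levels, and Dickson's Lemma applied to the vectors $(\minsuff{p,p'}{i})_{(p,p')}$ along this subsequence yields $l$ and $l+K$ satisfying both conditions at the single base level $i=l$. Both conditions are then propagated to every $i\ge l$ by replaying strategies with an offset of $K$: for claim 1, an $(l{+}K)$-safe winning strategy of \R\ from $(p(i{+}K),p'm')$ (\cref{lem:dec:safe}) is reused from $(pi,p'm')$, and the agreement of the patterns at levels $l$ and $l+K$ guarantees that the first visit to level $l$ lands on a \Scolour\ line, where \R\ can continue with some winning strategy; for claim 2, \V's winning strategy from $(p(i{+}K),p'm')$ is reused from $(pi,p'm')$, using $\minsuff{q,q'}{l+K}\ge\minsuff{q,q'}{l}$ from the Dickson choice. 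None of these ingredients --- pattern recurrence, Dickson's Lemma, the $K$-offset strategy transfer --- appears in your proposal, and without them (or some other non-circular source of $l$ and $K$) the approach cannot be completed.
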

\begin{proof}
 We consider the patterns
 $Pat_i: Q\x Q'\to \{\Scolour,\Dcolour\}$ indicating the colors of
 all lines at level $i\in N$:
 $Pat_i(p,p') = \Scolour$ iff $\Cline{p,p'}{i}$ is \Scolour.
 Naturally, with increasing index $i$, this pattern eventually repeats
 and we can extract an infinite sequence of indices with the same pattern.
 This prescribes a sequence of vectors 
 in which each component corresponds to $\minsuff{q,q'}{i}<\infty$
 for some $(q,q')\in Q\x Q'$.
 Dickson's Lemma allows us to pick indices $l$ and $l +K\in\N$ that
 satisfy both conditions in the claim of the lemma
 (for $i=l$).
 It remains to show that
 both claims hold also for all $i>l$.

 For the first claim, assume towards a contradiction that for some pair
 $(p,p')$, the color of $\Cline{p,p'}{i}$ is different from that of
 $\Cline{p,p'}{i+K}$.
 W.l.o.g.\ assume further $\Cline{p,p'}{i}$ is \Dcolour\ (the other case is symmetric).
 Then,

 \begin{enumerate}
   \item $\forall m'.\ pi+K\notSIM{}{}p'm'$, and 
   \item $\forall m'\ge \minsuff{}{i}.\ pi\ssim p'm'$. 
 \end{enumerate}

 We fix $m'\ge \minsuff{}{i}$, that is, $pi\ssim p'm'$ and $pi+K\notSIM{}{}p'm'$.
 By \cref{lem:dec:safe}, \R\ has a $(l+K)$-safe winning strategy $\sigma$ for
 the simulation game from position $(p(i+K),p'm')$.
 \Cref{fig:dec:regularity} illustrates this scenario.

 We claim that \R\ can reuse this strategy and win also from position $(pi,p'm')$.
 Consider the simulation game from $(pi,p'm')$ in which \R\ initially plays according to $\sigma$.
 Then on any branch, either no position visits level $l$ or there is some first position
 which does. Fix some branch on this partial strategy.
 In the first case, the corresponding branch on $\sigma$
 never visits level $l+K$ and ends in some position $(q(n+K),q'n')$
 which is immediately winning for \R. This means our branch
 ends in position $(qn,q'n')$ where $n>l\ge 0$ and from which \R\ wins immediately
 because she can mimic the attack in $\sigma$.
 Alternatively, this branch visits some position $(ql,q'n')$
 at level $l$ for the first time
 and the corresponding branch in $\sigma$ visits $(q(l+K),q'n')$.
 Since $\sigma$ is $(l+K)$-safe, we know that the line $\Cline{q,q'}{l+K}$
 is \Scolour. Because of our assumption that the pattern at levels $l$ and $l+K$
 agree, the line $\Cline{q,q'}{l}$ must also be \Scolour.
 Therefore in particular we have $ql\notSIM{}{}q'n'$, so \R\ may continue
 the game from position $(ql,q'n')$ using some winning strategy.
 We have shown that there is a winning strategy for \R\ from position $(pi,p'm')$.
 As $m'$ was chosen arbitrarily, the line $\Cline{p,p'}{i}$
 is \Scolour, which contradicts our assumption and thus completes the proof
 of the first claim.

 For the second claim,
 it suffices to show that for all pairs $(p,p')\in Q\x Q'$, $i \ge l$ and $m'\in\N$ it holds that
 \begin{equation}
   p(i+K) \ssim p'm' \implies pi \ssim p'm'
 \end{equation}

 Let $\sigma$ be a winning strategy for \V\ in the simulation game
 from position $(p(i+K),p'm')$
 and consider a first position $(q(l+K),q'n')$ on a play of $\sigma$
 where \R's counter drops to $l+K$. Since this is a position on a winning
 strategy, $n'$ must be greater or equal $\minsuff{q,q'}{l+K}$.
 If \V\ plays according to the same strategy from position $(pi,p'm')$,
 then this partial play ends in $(ql,q'n')$,
 which is winning for \V\ because $n'\ge \minsuff{q,q'}{l+K}\ge
 \minsuff{q,q'}{l}$ due to our choice of $l$ and $K$.
 We conclude that $\sigma$ must be a winning strategy for \V\
 in the simulation game from $(pi,p'm')$.
\end{proof}
\begin{figure}
\begin{center}
    \includegraphics[scale=0.6]{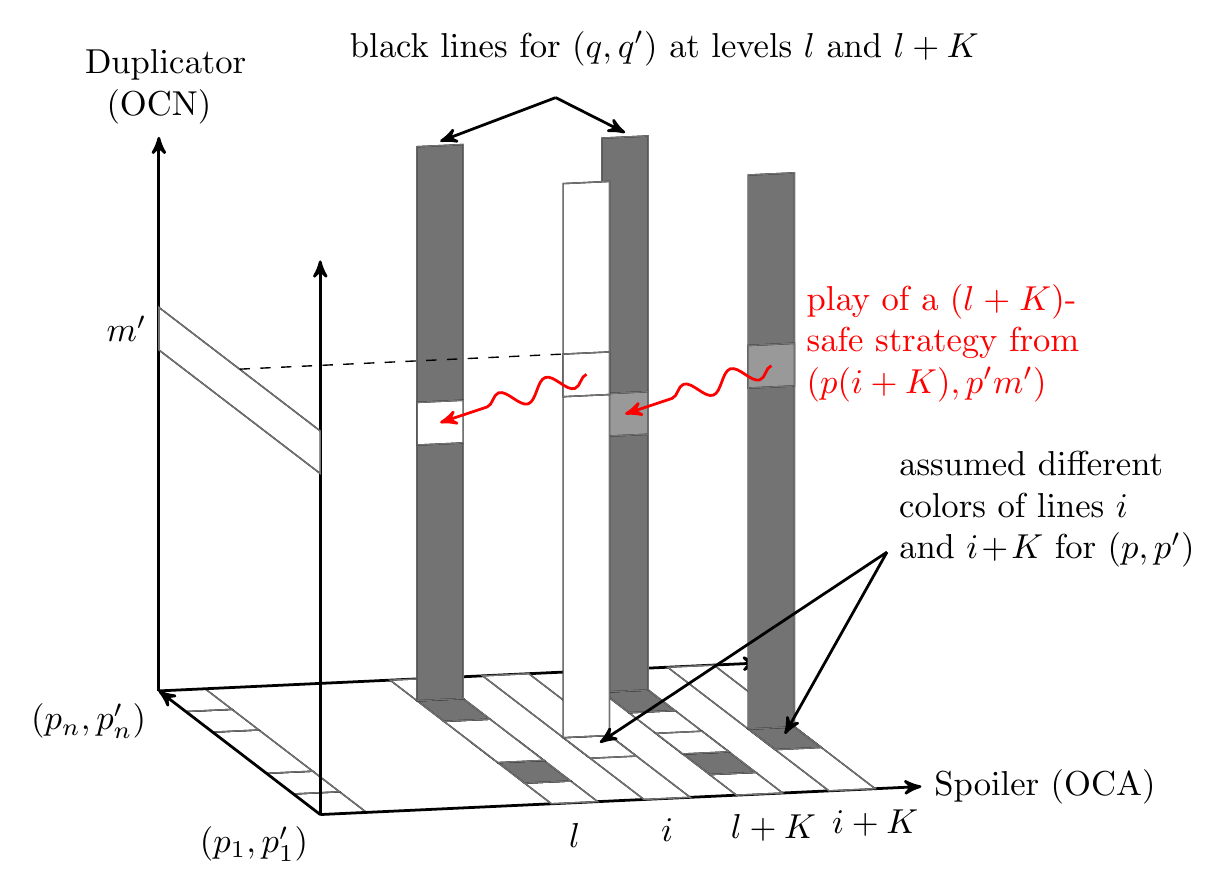}
\end{center}
  \vspace{-1 em}
\caption{Illustrates part 1) of the proof of \cref{lem:dec:regularity}}
\label{fig:dec:regularity}
  \vspace{-1.5 em}
\end{figure}

\Cref{lem:dec:regularity} allows us to fix a value $l\in\N$,
that marks the column in the coloring of $\ssim$ from which on
the distribution of \Scolour\ lines is repetitive
(with period $K$).
We split the simulation relation into two infinite subsets (see
\cref{fig:dec:cut}):
\begin{align*}
    S_{<l} =&\ \ssim\,\cap\ Q\x\{n\in \N\ |\ n<l\}\x Q'\x \N\\
    S_{\ge l} =&\ \ssim\, \cap\ Q\x\{n\in \N\ |\ n\ge l\}\x Q'\x \N.
\end{align*}

\begin{figure}
\begin{center}
    \includegraphics[scale=0.6]{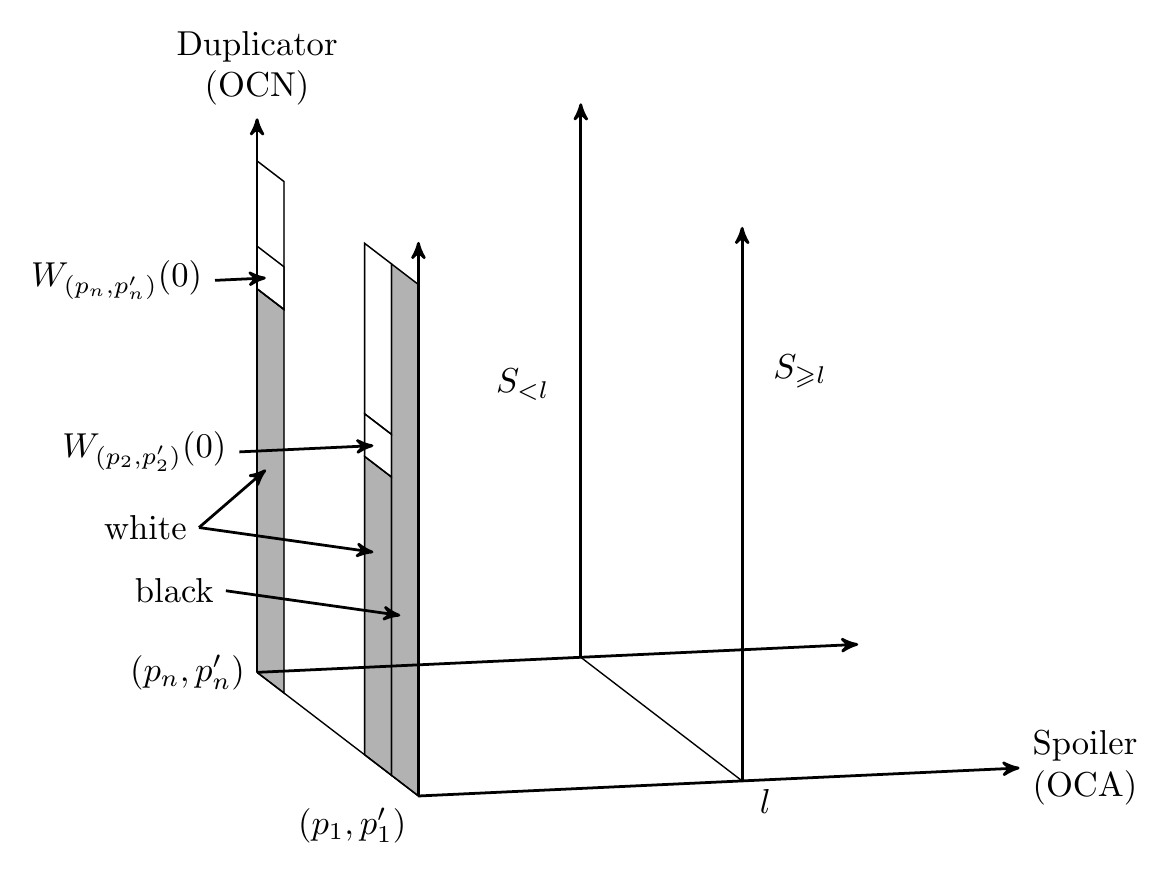}
\end{center}
\vspace{-1 em}
\caption{The coloring of $\ssim$ is cut into two sets
    $S_{<l}$ and $S_{\ge l}$.}
\label{fig:dec:cut}
  \vspace{-1 em}
\end{figure}

Note that the set $S_{<l}$ is semilinear because it is
the upward closure of the minimal \Dcolour\ positions
$\minsuff{p,p'}{i}$ for the finitely many pairs $(p,p')$ and $i<l$.
%
%
%
%
It remains to show that $S_{\ge l}$ is semilinear.
For this, we recall the following result on strong simulation over one-counter nets.

\begin{theorem}[\cite{HLMT:FSTTCS2013}]
    \label{thm:dec:ocn-ocn-slin}
    Let $B$ and $B'$ be two one-counter nets.
    The maximal simulation relation $\SIM{}{B,B'}$
    relative to $B$ and $B'$ is semilinear
    and one can effectively construct a semilinear representation.
%
\end{theorem}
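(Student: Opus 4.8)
Since both $B$ and $B'$ are \emph{nets}, there is no zero-testing on either side, and hence the two-sided analogue of \cref{lem:dec:monotonicity} holds: raising the counter of the simulating net can only help \V, while raising the counter of the spoiling net can only help \R. My plan is to study, for each of the finitely many control-state pairs $(p,p')\in Q\x Q'$, the $2$-coloring of the quadrant $\N\x\N$ given by $\colour{p,p'}{m}{m'}=\Dcolour$ iff $pm\ssim p'm'$, to prove that each such coloring is a semilinear subset of $\N\x\N$, and to conclude semilinearity of $\SIM{}{B,B'}$ by taking the finite union.

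First I would reduce each plane to its \emph{frontier}. By two-sided monotonicity the simulated region of a plane is downward closed in the challenger counter $m$ and upward closed in the defender counter $m'$, so it is completely described by a non-decreasing threshold function $m\mapsto \minsuff{p,p'}{m}\in\N\cup\{\infty\}$ whose finite domain is an initial segment of $\N$. Semilinearity of the plane is then equivalent to $\minsuff{p,p'}{\cdot}$ being \emph{eventually affine with a periodic offset}: there should be a rational \emph{critical slope} $\rho$ --- the best sustainable ratio of counter change per step between the two nets --- and a period $K$ such that, beyond some threshold, the frontier follows a ray of slope $\rho$ up to a $K$-periodic correction (or is eventually $\infty$). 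A finite initial part together with this eventually-periodic tail is a semilinear presentation.

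The cornerstone, and the step I expect to be the main obstacle, is a pumping argument establishing this eventual periodicity. Because both control sets are finite and each step changes a counter by at most one, any sufficiently long play of the simulation game repeats a control-state pair while both counters have grown; the aim is to convert such a repetition into a strategy transformation proving $pm\ssim p'm'$ iff $p(m+K)\ssim p'(m'+K')$ for suitable $K,K'$ matching the slope $\rho$, once $(m,m')$ lies deep in the ``belt'' around the critical line. The delicate point is controlling the zero boundary: shifting a winning strategy along the direction of slope $\rho$ must never let either counter touch $0$ unexpectedly, since it is precisely the blocking of a decrementing move at counter value $0$ that can invalidate a pumped strategy. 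This forces a case split into the three regions below, inside, and above the belt, and in the belt one must argue that along the shifted play both counters stay uniformly bounded away from $0$, so that the repetition can indeed be iterated.

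Finally, effectiveness follows once all parameters are bounded computably: the critical slope $\rho$ is a ratio of integers bounded in terms of $|Q|,|Q'|$ and the maximal counter change, the period $K$ divides a computable quantity, and the finite initial part of each frontier can be determined by a search that terminates by the periodicity just established. Alternatively, one extracts a candidate semilinear representation directly from these bounds and verifies that it is a simulation relation using Presburger arithmetic, exactly as in the proof of \cref{thm:sim-OCA-OCB-dec}. Either way this yields an effective semilinear representation of $\SIM{}{B,B'}$.
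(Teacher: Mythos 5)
First, a point of context: the paper itself gives \emph{no} proof of this statement. \Cref{thm:dec:ocn-ocn-slin} is imported verbatim from \cite{HLMT:FSTTCS2013} and used as a black box inside the proof of \cref{lem:dec:oca-to-ocn}, so the only meaningful comparison is with the cited work (whose strategy goes back to \cite{JKM2000}). Your outline does reconstruct that strategy faithfully: one plane per control-state pair, a monotone frontier, belts of rational slope, eventual periodicity. Your structural observations are also correct and genuinely use that both systems are nets: since neither side can test for zero, the \Dcolour\ region of each plane is downward closed in \R's counter and upward closed in \V's counter, hence determined by a non-decreasing threshold function, and an eventually periodic-affine frontier yields semilinearity.

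The gap is that the cornerstone is only announced, never proved. Flagging the pumping/periodicity lemma as ``the main obstacle'' and listing its difficulties (choice of the critical slope $\rho$, the three-region case split, control of the zero boundary) is a statement of the problem, not a solution: extracting $\rho$ from the cycle effects of the product control graph (different pairs $(p,p')$ may need different slopes), proving that the frontier stays within a belt of \emph{bounded thickness} around the corresponding ray, and showing that a shifted winning strategy is never invalidated when a counter hits $0$ is precisely where all the technical work of \cite{JKM2000,HLMT:FSTTCS2013} is concentrated, and none of it is carried out here. Your effectiveness argument has an independent flaw as well: the fallback of enumerating semilinear candidates and ``verifying that it is a simulation relation using Presburger arithmetic'' cannot establish the theorem, because Presburger arithmetic can check that a candidate $R$ \emph{is a} simulation (whence $R \subseteq {\ssim}$), but not that it is the \emph{maximal} one --- maximality is a greatest-fixpoint property with no first-order certificate, and a semilinear fixpoint of the expansion operator need not be the greatest fixpoint. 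This distinction is exactly why \cref{thm:sim-OCA-OCB-dec} only decides membership queries, and why the paper's own semilinearity result for OCA versus OCN (\cref{thm:dec:sim-is-slin}) is non-effective, as the Conclusion explicitly concedes. So effectiveness must come from explicit computable bounds on the belts, periods, and initial segments --- that is, from a quantitative version of the very lemma you left unproven.
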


In order to compute $S_{\ge l}$,
we construct two one-counter nets $B$ and $B'$,
such that there is a direct, and Presburger definable, correspondence
between simulation in $B,B'$ and simulation in $A,A'$.
These nets are parameterized by $A,A',l,K$,
the minimal values $\minsuff{p,p'}{l}$ at level $l$
and the patterns $Pat_i$
which determine the distribution of black lines
for indices $l\le i\le l +K$.

%
\begin{lemma}\label{lem:dec:oca-to-ocn}
%
    There exist two one-counter nets $B$ and $B'$ with state-sets
    $R$ and $R'$ respectively,
    and Presburger definable functions
    $F:(Q\x Q'\x\N) \to R$ and
    $G:(Q\x Q'\x\N) \to R'$
    such that 
    for all $p\in Q$, $p'\in Q'$ and $m,m'\in \N$,
    \begin{equation*}
        p(m+l)\SIM{}{A,A'}p'm' \iff
        F(p,p',m) m \SIM{}{B,B'} G(p,p',m) m'.
    \end{equation*}
\end{lemma}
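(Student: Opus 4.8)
The plan is to realize the restricted relation $S_{\ge l}$ as the maximal simulation between two one-counter \emph{nets}, exploiting that strictly above level $l$ the automaton $A$ behaves like a net and that, by \cref{lem:dec:regularity}, its simulation behavior there is eventually periodic. Concretely, I would re-base the counter of the OCA by $l$, so that an $A$-configuration $p(m+l)$ corresponds to a $B$-configuration of counter value $m$. The control state $F(p,p',m)$ of $B$ records $p$ together with the period phase $m \bmod K$, and symmetrically $G(p,p',m)$ records $p'$ together with the same phase $m \bmod K$, so that both nets agree on where in the period the play currently sits. Since ``residue modulo the fixed constant $K$'' together with a finite case distinction is expressible in Presburger arithmetic, both $F$ and $G$ are Presburger definable functions into the finite state sets $R$ and $R'$.

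While the $B$-counter is positive (so the $A$-counter is strictly above $l$ and hence positive), no zero-test of $A$ can fire, so I would let $B$ simply mirror the net-transitions $\delta$ of $A$, updating both the recorded $A$-state and the phase as the counter changes; likewise $B'$ mirrors $\delta'$ of $A'$. On this part the correspondence between the two simulation games is immediate, and \cref{lem:dec:regularity}.1 guarantees that carrying only the phase $m \bmod K$ in finite control suffices to recover the $K$-periodic distribution of \Scolour\ lines, while \cref{lem:dec:regularity}.2 provides the net-style monotonicity of the thresholds in each block.

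The delicate point is the behavior at and below level $l$. The region of $A$ with counter values $0,\dots,l-1$ is finite in the OCA-counter, and — because all transitions change the counter by at most one — it is entered from and left to the net part only across the single boundary level $l$. I would therefore compile this bounded region (including the zero-test transitions $\delta_0$, which are now harmless since the counter is tracked in finite control) into finite gadgetry attached at the zero-boundary $B$-counter $=0$, with edges reconnecting to the net part exactly at level $l$. The role of the unbounded $A'$-counter throughout this region is summarized by the finite boundary data $\{\Cline{q,q'}{i}\}$ and $\{\minsuff{q,q'}{i}\}$ for $i\le l$: whenever a play reaches level $l$ in state $q$ against a $B'$-state encoding $q'$, \V\ survives precisely when $\Cline{q,q'}{l}$ is not \Scolour\ and the $B'$-counter is at least $\minsuff{q,q'}{l}$, the latter checked by a bounded sequence of decrements on the $B'$-side that a net can perform. \cref{lem:dec:safe} and both claims of \cref{lem:dec:regularity} are exactly what certify that this finite summary faithfully reproduces the true outcome of every play reaching level $l$, via the monotonicity of \cref{lem:dec:monotonicity}.3.

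With the construction in place, the equivalence $p(m+l) \SIM{}{A,A'} p'm' \iff F(p,p',m)m \SIM{}{B,B'} G(p,p',m)m'$ would be proved by transferring winning strategies both ways: above the boundary the two games step in lockstep, and whenever a play descends to level $l$ the gadgets return the predetermined winner, so a winning \R-strategy (resp.\ \V-strategy) in one game induces one in the other. I expect the main obstacle to be precisely this boundary treatment, since a net cannot test its counter for zero whereas $A$ both tests for zero and may decrement below $l$; one must arrange the summarizing transitions so that they are advantageous only at the boundary and, by \cref{lem:dec:regularity}.2, never give either player a spurious option higher up. The regularity (periodicity together with monotonicity above $l$) is the key that makes such a uniform, zero-test-free encoding possible; once \cref{lem:dec:oca-to-ocn} is established, semilinearity of $S_{\ge l}$ follows by composing the Presburger definable $F,G$ with the semilinear relation $\SIM{}{B,B'}$ furnished by \cref{thm:dec:ocn-ocn-slin}.
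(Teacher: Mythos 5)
Your outline matches the paper's construction at the architectural level: re-base the counter by $l$, keep the phase $m \bmod K$ in finite control, mirror the net transitions above the boundary, summarize level $l$ by the data $\minsuff{q,q'}{l}$ and the black-line pattern, invoke \cref{lem:dec:regularity}.2 to neutralize spurious use of the boundary gadgets at higher (phase-$0$) levels, and finish via \cref{thm:dec:ocn-ocn-slin}. However, there is a genuine gap at the heart of the construction. All of your boundary gadgets are parameterized by the \emph{pair} $(q,q')$: whether $\Cline{q,q'}{l+n}$ is \Scolour, and the threshold $\minsuff{q,q'}{l}$ that bounds the decrementing sequence on \V's side. Yet by your own choice of $F$ and $G$, the states of $B$ record only $q$ (plus the phase) and the states of $B'$ record only $q'$ (plus the phase), and with plain mirroring of $\delta$ and $\delta'$ over the original alphabet $\Act$ the two control states evolve independently. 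So at the moment a gadget is triggered, neither net ``knows'' the pair, and hence neither the presence of the black-line loop in $B$, nor the length of the chain, nor the point at which $B'$ may stop decrementing, is definable. This is precisely the problem the paper's construction is designed to solve: \emph{both} nets have states in $Q\x Q'\x\N_{<K}$, the action alphabet consists of the transition names $\delta\cup\delta'\cup\{\$\}$ rather than $\Act$, and one original round is emulated in two rounds---\R\ announces her transition, \V\ records his chosen reply in his control state, and \R\ must faithfully relay that reply or \V\ escapes to a universal state $u$. This forced synchronization of the pair in both control states is what allows the $\$$-self-loops (on pair-states whose line is \Scolour) and the $\$$-chains of length $\minsuff{q,q'}{l}$ (out of phase-$0$ pair-states) to be wired into the control graphs. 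Without this mechanism, or an equivalent one (e.g.\ having \R\ announce $q$ in the gadget-entry action so that \V\ can record the pair and count up to $\minsuff{q,q'}{l}$ in his own finite control), the stated equivalence cannot hold for your nets.

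A second, separate problem is your initial plan to compile the region of $A$ with counter values $0,\dots,l-1$ into finite gadgetry ``with edges reconnecting to the net part exactly at level $l$.'' Even granting pair-knowledge, this fails: a net can neither test nor reset its counter, so if \R\ enters the gadget with $B$-counter $n>0$ (which cannot be prevented) and the compiled play later re-enters the net part at pretended level $l$, her counter still holds the stale value $n$ while her control claims level $l$, destroying the correspondence between $p(m+l)$ and counter value $m$. The paper avoids re-entry altogether: reaching level $l$ is terminal and is resolved purely by the threshold check, which is exact because by monotonicity $ql\SIM{}{A,A'}q'n'$ holds iff $\Cline{q,q'}{l}$ is not \Scolour\ and $n'\ge\minsuff{q,q'}{l}$; no transition of $A$ below level $l$ (in particular no zero test) appears in $B$ in any form. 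Your later sentences do pivot toward this pure summary view, but the proposal wavers between the two designs, and only the second one can be implemented by one-counter nets.
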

\begin{proof}
    We construct nets $B=(Q\x Q'\x\N_{<K}\cup R_B,\Act',\delta_B)$
    and $B'=((Q\x Q'\x\N_{<K})\cup R_B',\Act',\delta_B')$
    with actions $\Act'=\delta_A\cup \delta_A'\cup\{\$\}$.
    One round of the simulation game w.r.t.~$A$ and $A'$
    will be emulated in two rounds of the game w.r.t.~$B$ and $B'$.
    Apart from auxiliary states in $R_B$ and $R_B'$,
    each state encodes a pair
    of states
    of $A$ and $A'$
    respectively,
    together with the counter value
    of \R\ modulo $K$, so that
    an original
    position
    $(p(m+l), p'm')$ corresponds to the position
    $((p,p',m\mod{K})m,(p,p',m\mod{K})m')$.
    Unless the parameter lets us immediately derive a winner
    for the current position (for example the game reaches a position on a
    \Scolour\ line)
    the new game will continue to
    emulate the old game and end in a position of the above form
    every other round.

    The net $B$ contains the following transitions
    for every $(p,p',m)\in (Q\x Q'\x\N_{<K})$
    and every $t=(p,a,d,q)\in\delta$ where $n=m+d\mod{K}$.
    \begin{align}
        &(p,p',m) \step{t,d} (t,p',n),\\
        &(t,p',n) \step{t',0} (q,q',n)\quad\text{for every } t'=(p',a,d',q')\in \delta
    \end{align}
    The net $B'$ contains
    a universal state $u\in R_B'$ such that $u\step{a,0}u$ for every $a\in\Act'$
    and moreover, the following transitions
    for every $(p,p',m)\in (Q\x Q'\x\N_{<K})$
    and every $t=(p,a,d,q)\in\delta$ where $n=m+d\mod{K}$.
    \begin{align}
        &(p,p',m) \step{t,d'} (t,t',n) &&\text{for every } t'=(p',a,d',q')\in\delta'\\
        &(t,t',n) \step{t',0} (q,q',n)&&\\
        &(q,t',q',n) \step{s,0}u &&\text{for every } s\neq t'\in \delta'.
    \end{align}
    The transitions above allow to emulate one round of the original game
    in two rounds: \R\ announces the transition $t$ she chooses in the original game,
    then \V\ responds by recording his chosen transition $t'$ and remembers both choices in his control-state.
    In the next round \R, who needs to prevent her opponent from becoming
    universal, must faithfully announce her opponents original response.
    Afterwards, \V\ has no choice but to
    also update his state to $(q,q',n)$, which reflects the
    new original pair of states and \R's new counter value modulo $K$.

    So far, this new game is in favor of \V, because it is essentially the
    original game where \R\ is deprived of her zero-testing transitions.
    We now correct this imbalance, using the additional (given) information about the
    values $\minsuff{p,p'}{l}$, as well as the knowledge about which lines
    are completely black.
    
    Recall that for all $i>l$, the line $\Cline{p,p'}{i}$
    is \Scolour\ iff the line $\Cline{p,p'}{i+K}$ is.
    Every state $(p,p',m)$ of $B'$ has a 
    $\$$-labeled self-loop with effect $-1$.
    Moreover, 
    a state $(p,p',m)$ of $B$ has a
    non-decreasing $\$$-labeled self-loop
    if the line $\Cline{p,p'}{m+l}$ is $\Scolour$.
    This ensures that in the new game, \R\ can win from
    positions $(p,p',m\mod{K})m,(p,p',m\mod{K})m'$ if the color of
    $\Cline{p,p'}{l+m}$ is \Scolour, regardless of the actual value $m'$
    of \V's counter.

    Lastly, we add the possibility for \R\ to successfully end
    the game if a position $((p,p',0)l, (p,p',0)m')$
    is reached where her counter value
    equals $l$ and \V's value $m'$ is below $\minsuff{p,p'}{l}$.
    For each $(p,p')\in Q\x Q'$, the control graph of net $B$ contains a
    path
    \begin{equation}
        \label{eq:oca-to-ocn-chain}
    (p,p',0)\step{\$,0}s_k\step{\$,0}s_{k-1}\step{\$,0}\,\cdots\,\step{\$,0}s_0
    \end{equation}
    of length $k=\minsuff{p,p'}{l}$.
    We argue, assuming that nets $B$ and $B'$ are correctly parameterized,
    that
    $p(m+l)\SIM{}{A,A'}p'm'$ iff
    $(p,p',m\mod{K}) m \SIM{}{B,B'} (p,p',m\mod{K}) m'$.
    Observe that this implies the claim of the lemma, as
    projections and multiplication (and division) by fixed values $K$
    are definable in Presburger Arithmetic.

    Assume $p(m+l)\notSIM{}{A,A'}p'm'$ and consider the game on $B$ and $B'$
    from position $((p,p',m\mod{K}) m, (p,p',m\mod{K}) m')$.
    \R\ moves according to her original winning strategy,
    preventing her opponent from reaching state $u$ and thus faithfully
    emulates a play of the game on $A$ and $A'$.
    One of three things must eventually happen:

    1) \V\ is forced to reduce his counter below $0$ and up to then,
    \R's counter always remains strictly above level $l$ and no visited position
    corresponds to a point on a black line.
    Such a play is losing for \V\ in both games.

    2) A position 
    $((q,q',n\mod{K})n,(q,q',n\mod{K})0)$ is reached
    and the line $\Cline{q,q'}{l+n}$ is \Scolour.
    This means the states $(q,q',n\mod{K})$
    of both nets have a $\$$-labeled self loop,
    where only the one in $B'$ is decreasing. \R\ wins from such a position
    by iterating this loop.

    3) A position 
    $((q,q',0)0,(q,q',0)m')$ is reached where \R's counter equals $0$
    and $m'<\minsuff{q,q'}{l}$. In this case, \R\ wins by
    moving along the $\$$-labeled path described by
    \cref{eq:oca-to-ocn-chain},
    which allows her to make exactly $\minsuff{q,q'}{l}+1$
    many steps which are decreasing \V's counter.

    Conversely, assume $(p,p',m\mod{K}) m \notSIM{}{B,B'} (p,p',m\mod{K}) m'$
    and consider a play along which \R\ plays according to some winning strategy.
    Until \R\ makes use of a $\$$-labeled step, the
    play in the game on $B,B'$ directly corresponds to a play
    on $A,A'$. This is because \R\ must prevent her opponent from
    reaching the universal state $u$.
    A play that \R\ wins without $\$$-steps
    thus yields a winning play in the game on $A,A'$.
    Otherwise, consider a first position
    $((q,q',n\mod{K})n,(q,q',n\mod{K})n')$ from which \R\ plays
    a $\$$-step.
    The fact that the state $(q,q',n\mod{K})$ has a $\$$-labeled outgoing
    transition means that either
    $\Cline{q,q'}{n+l}$ is \Scolour\ or this line is \Dcolour\ but
    $n\mod{K}=0$.
    The former means that in particular that $q(n+l)\notSIM{}{A,A'}q'n'$.
    In the latter case, \R\ moves from state $(q,q',0)$ to
    the initial state $s_k$ of a chain of length $k=\minsuff{q,q'}{l}$,
    at the end of which she would deadlock.
    This does not happen because the play is a win for \R.
    Therefore, the counter value $n'$ of \V\ must be
    strictly below $\minsuff{q,q'}{l}$.
    Since $K$ divides $n$,
    point~2 in \cref{lem:dec:regularity} implies
    $n'<\minsuff{q,q'}{l}\le \minsuff{q,q'}{l + n}$
    which means that
    $q(l + n) \notSIM{}{A,A'}q'n'$.
    
    We conclude that a winning strategy
    from $((p,p',m\mod{K})m, (p,p',m\mod{K})m')$ in the game on $B,B'$
    prescribes a strategy for \R\ in the game on $A,A'$ from position
    $(p(m+l),p'm')$ such that each play eventually
    leads to a winning position
    and therefore, $p(m+l)\notSIM{}{A,A'}p'm'$
\end{proof}

The semilinearity of $S_{\ge l}$
and thus of $\ssim$ now follows
from \cref{thm:dec:ocn-ocn-slin} and \cref{lem:dec:oca-to-ocn}.
This completes the proof of \cref{thm:dec:sim-is-slin}.

\section{Undecidability Results}

\newcommand{\qinit}{\ensuremath{q_{init}}}
\newcommand{\qhalt}{\ensuremath{q_{halt}}}
\newcommand{\inci}{\ensuremath{c^+_i}}
\newcommand{\deci}{\ensuremath{c^-_i}}
\newcommand{\z}{\ensuremath{\mathbf{z}}}
\newcommand{\nz}{\ensuremath{\mathbf{nz}}}
\newcommand{\incm}[3]{(\ensuremath{{#1},[{#2}^+ ; {#3}])}}
\newcommand{\decm}[4]{\ensuremath{({#1},[\mathbf{if}~({#2}=0)~\mathbf{then}~{#2};{#3}~\mathbf{else}~{#2}^-;{#4}]) }}

\subsection{Pushdown Energy Games}

We show that the fixed as well as the unknown initial credit problem for 
pushdown energy games is undecidable. This holds even
with a single energy dimension. 
Consequently, we may use \cref{lem:dec:reverse-reduction} to deduce 
the undecidability of the simulation problem between pushdown automata and OCN.

The undecidability is established 
by a reduction from the  halting  problem for Minsky's $2$-counter machines.
We begin by recalling the definition.
The machine is equipped with two counters, $c_1$ and $c_2$,
that take values over natural numbers. There are two types
of transitions. The first kind simply increments one of the counters. 
The second kind checks the value of a counter and 
decrements it iff it is $>0$.

\begin{definition}
A  $2$-counter machine (MCM) $M$ is a tuple $(Q,\qinit,\qhalt,\delta)$ where $Q$ is a finite set of control-states, $\qinit$ is the initial state, $\qhalt$ 
is the halting state, and $\delta$ is a finite set of transitions of
the following two forms.
\begin{itemize}
\item $\incm{q}{c_i}{q'}$, $i \in \{1,2\}$ and $q \neq \qhalt$. Increment counter $c_i$ unconditionally and go to state $q'$.
\item $\decm{q}{c_i}{q'}{q''}$, $i \in \{1,2\}$ and $q \neq \qhalt$. Check the value of counter $c_i$, go to state $q'$ if it
is $0$ or go to state $q''$ after decrementing $c_i$ if it is $>0$.
\end{itemize}

A \emph{configuration} of such a machine is an element of $Q\times\N^2$.
The initial configuration is $(\qinit,(0,0))$. We say that the configuration
$(q,(m_1,m_2))$ moves to $(q',(m_1',m_2'))$ in one step, written 
$(q,(m_1,m_2))\step{} (q',(m_1',m_2'))$, iff
        \begin{enumerate}
          \item $\incm{q}{c_i}{q'} \in \delta$, $m'_i=m_i+1$  and $m'_{3-i}=m_{3-i}$
          \item $\decm{q}{c_i}{q'}{q''} \in \delta$, $m_i=m'_i=0$ and $m'_{3-i}=m_{3-i}$.
          \item $\decm{q}{c_i}{q''}{q'} \in \delta$, $m_i>0$, $m'_i=m_i-1$ and $m'_{3-i}=m_{3-i}$.
        \end{enumerate}

A \emph{run} is a finite or infinite sequence of steps between configurations.
A run is \emph{maximal} if it is either infinite or ends at a configuration
where no move is possible. W.l.o.g. we assume that at least one move
is possible in any configuration whose control state is not $\qhalt$.

A $2$-counter machine is said to be \emph{deterministic} if there is a unique 
maximal run starting from the initial configuration $(\qinit,(0,0))$. 
Notice that this  run either reaches the state $\qhalt$ (and halts) or 
is  infinite.
\end{definition}

We now prove the following lemma which immediately implies that both
the fixed and unknown initial credit problem for pushdown games are undecidable.

\begin{lemma}
Given a deterministic MCM $M=(Q,q_{init},q_{halt},\delta)$, one can 
effectively construct a $1$-dimensional pushdown energy game 
$G = (Q_0, Q_1, \Gamma,\delta_G,1)$ s.t.~$M$ halts iff Player $0$ wins the energy game from every initial
energy credit. Moreover, if Player $0$ wins the energy game for some initial energy credit 
then she wins from every initial energy credit.
\end{lemma}
\begin{proof}
If $M$ does not terminate then it diverges i.e., the only valid run is infinite.
The overall idea is to let Player 1 propose this infinite run by pushing the corresponding sequence $\tau_1,\tau_2,\ldots$ of transitions onto the stack.
If $M$ actually terminates and there is no infinite run, Player 1
must eventually ``cheat'' and announce a next step that is not a valid
continuation of the run committed to the stack.
After each such move, Player 0 can choose to either accept the last MCM step
and let Player 1 continue to push the next one, or she can choose
to challenge its validity and move the game to a test.
There is a test for every type of
error that can be spotted by Player $0$
and each such test has one of three possible outcomes.

\begin{enumerate}
\item The energy level goes below zero and Player $0$ wins.
\item The game returns to a position with initial state and empty stack,
    but the energy level is smaller than it was before.
\item The game returns to a position with initial state and empty stack,
    but the energy level is greater or equal than it was before.
\end{enumerate}

The first outcome is obviously good for Player $0$
and so is outcome $2$, because
eventually, after sufficiently many such outcomes, the energy has to run out
and she will win.
Outcome $3$ is good for Player $1$, because it makes his position at least as good
as it was before.
We will implement test gadgets that Player $0$ may choose to invoke.
If Player $0$ has correctly spotted an error then the outcome of the gadget 
will be $1$ or $2$, and if she was wrong then the outcome will be $3$. 

We now describe the construction formally, starting with the part of the game in which
Player $1$ writes a run of the MCM onto the stack.
%
The first part of the construction guarantees that the ending state of each transition 
$\tau_i$ matches the starting state of the next transition $\tau_{i+1}$.
Given $M$, we build a finite graph as follows.
Vertices are states of $M$ and for every transition of the first type $\tau_l=\incm{q}{c_i}{q'}$, we add a directed edge $q, q'$ labeled with $ \tau_{l,c_i} $ and for every transition of the second type
 $\tau_l=\decm{q}{c_i}{q'}{q''}$ we add a pair of directed edges $q,q'$ and $q,q''$ labeled $\tau_{l>0, c_i}$ and $\tau_{l=0, c_i}$, respectively. 
Every path in this graph corresponds to some correct or incorrect run of $M$.
Incorrectness may come from the fact that paths in the graph do not care about the 
values of the counters. 

Now we encode this graph into a part of the energy game.
Vertices becomes states owned by Player $1$, and each edge $q,q'$ labeled with 
$\tau_{X}$ is encoded by two sets of transitions. In the first set we 
have transitions $(q, Y ,s_{\tau_{X}},\tau_{X} Y, 1)$ for every stack symbol $Y$,
and $s_{\tau_{X}}$ is an intermediate state which belongs to Player $0$.
These push a record of the transition $\tau_{X}$ onto the stack and increment the
energy by $1$.
In the second set we have transitions $(s_{\tau_{X}}, Y , q', Y, 0)$ 
for every stack symbol $Y$. These do not change the stack and energy.
Additionally, in $s_{\tau_{X}}$, Player $0$ can decide if she wants to follow the edge to $q'$ or if she would 
rather invoke some testing gadget (see below).

In the halting state we add a self-loop which decrements the energy; 
this guarantees that if $M$ halts then Player $0$ wins.  

The crucial properties are that (except in the situation when the game reaches the halting state):  
\begin{itemize}
\item on the stack we have a word which remembers the path from the beginning up to now,
\item the energy level is equal to the initial value of the energy  $+$ the number of elements on the stack.
\end{itemize}

So what are the possible errors? Player $1$ can cheat only when he has a choice. 
Since $M$ is deterministic, we know that in the game choice comes only from the translation of 
the transitions of the second type $(\decm{q}{c_i}{q'}{q''})$ .
There are two types of errors.
\begin{enumerate}
\item Player $1$ tries to go from $q$ to $q'$ when $c_i >0$, or
\item Player $1$ tries to go from $q$ to $q''$ when $c_i =0$.
\end{enumerate} 

To detect these errors, we define four gadgets; two types times two counters. 
The gadget that we have to use is determined by the last transition; if it pushes to the stack $\tau_{l=0, c_i}$ then we use the first type gadget for the counter $c_i$ and if it pushes $\tau_{l>0, c_i}$ then we use the second 
type gadget for $c_i$.        

In both gadgets we pop stored transitions from the stack, and reduce the energy level accordingly,
in order to check the conditions $c_i >0$ and $c_i =0$, respectively.
\begin{itemize}
\item 
In the first type of error, Player $0$ should gain (outcome 1 or 2) 
if $c_i>0$, i.e., if in the stored history the number of increments 
of counter $c_i$ is greater than the number of decrements of $c_i$.
Otherwise, we should get outcome 3.
\item 
In the second type of error, Player $0$ should gain if $c_i=0$, i.e., 
we need to check if the number of decrements is equal to the number of increments.
Observe that we can safely assume that it is not greater, because this would mean
$c_i < 0$ and then there was an error earlier.
So Player $0$ should gain (outcome 1 or 2) if the number of decrements is equal to the number of increments,
but lose (outcome 3) if the number of decrements is less than the number of increments.
\end{itemize}
In the first type of gadget, we allow Player $0$ to pop the content of the stack 
according to following rules. 
\begin{enumerate}
\item In removing a transition not affecting $c_{i}$ she reduces
the energy level by $1$.
\item In removing a transition incrementing $c_i$ she reduces
the energy level by $2$.
\item In removing a transition decrementing $c_i$ she leaves
the energy level unchanged.
\end{enumerate}
If the number of increments was greater than the number of decrements 
then in the end of the gadget the energy level drops below
the initial energy level (outcome 1 or 2). 
This is because, before we enter the gadget, 
the energy level is equal to the initial energy level $+$ the number of elements on the stack, 
and in the gadget we decrease it by more than the number of elements on the stack. 
On the other hand, if the numbers of decrements and increments are equal, 
then the energy level is equal to the initial one (outcome 3).

In the second type of gadgets, the roles are reversed.
\begin{enumerate}
\item In removing a transition not affecting $c_i$ she reduces
the energy level by $1$.
\item In removing a transition incrementing $c_i$ she leaves
the energy level unchanged.
\item In removing a transition decrementing $c_i$ she reduces
the energy level by $2$.
\end{enumerate}
Finally, in the end we decrement the energy by $1$.

If the number of decrements is equal to the number of increments then we end with 
an energy level which is smaller than the initial energy level (outcome 1 or 2), due to
the final decrement by $1$.
On the other hand, if the number of 
decrements is smaller than the number of increments, 
then the final energy level is greater than (or equal to) 
what it was in the beginning (outcome 3).
\end{proof}
\subsection{One-counter energy games}\label{subsec:undec-onecounter}

We show that one-counter energy games of energy dimension $n \ge 2$ are
undecidable, via the undecidability of the corresponding simulation games. 

\begin{theorem}\label{thm:undec-sim-ocn-vass}
Simulation preorder between OCN and VASS of dimension $\ge 2$ is 
undecidable in both directions.
\end{theorem}
\begin{proof}
Consider a deterministic Minsky 2-counter machine $M$ 
with a set of control-states $Q$, counters $c_1$ and $c_2$ and
initial configuration $(q,(0,0))$. 
It either eventually reaches the accepting state $q_{\it halt}$ or runs forever.
We construct an OCN $A=(Q_A,\Act,\delta_A)$ with initial configuration
$(q,0)$ and a VASS $V=(Q_V,\Act,\delta_V)$ of dimension $2$ with
initial configuration $(q,(0,0))$ such that
$M$ halts iff $(q,0) \not\ssim (q,(0,0))$.
(The construction for the other simulation direction is very similar.)

Let $\Act = \{a,z,{\it nz},c,h\}$, $Q_A = Q \cup Q_A'$ and $Q_V = Q \cup Q_V'$
where $Q_A', Q_V'$ contain some auxiliary control-states (see below).
In the simulation game we maintain the following invariant of game configurations
$((q,z),(q',(x,y)))$. If $q=q' \in Q$ then $z=x+y$. I.e., except in some auxiliary states
of $Q_A', Q_V'$, the OCN counter will contain the sum of the VASS counters.
The idea is that the simulation game emulates the computation of $M$, where the
two counters are stored in the VASS counter values $x$ and $y$, respectively.
Via the classic forcing technique, \V\ gets to choose the next transition of $M$.
The only possible deviation from a faithful emulation of $M$ is where \V\ chooses
a zero-transition of $M$ when the respective counter contains a nonzero value,
e.g., $x>0$. In this case \R\ can win the game by forcing a comparison of
$z$ with $y$. If $x>0$ then, by the above invariant, $z>y$ and \R\ wins.
Otherwise, if $x=0$ then $z=y$ and \V\ wins.

Since $M$ is deterministic, there is only one transition rule for every 
control-state $q$. 

If the rule of $M$ is of the form $\incm{q}{c_1}{q'}$
then we add a rule $(q,a,+1,q')$ to $\delta_A$
and a rule $(q,a,q',(1,0))$ to $\delta_V$. 
(The other case where $c_2$ is incremented is symmetric.)
Thus one round of the simulation game
emulates the transition of $M$ and the invariant is maintained.

Otherwise, the rule of $M$ is of the form 
$\decm{q}{c_1}{q'}{q''}$
(the case where $c_2$ is tested is symmetric).
We add a rule $(q,a,0,\hat{q})$ to $\delta_A$
and rules $(q,a,q_1,(0,0))$ and $(q,a,q_2,(-1,0))$ to $\delta_V$.
By choosing $q_1$ (resp.\ $q_2$) \V\ claims that $c_1$ is zero (resp.\ nonzero).
A nonzero claim is certainly correct, since the transition to $q_2$ decrements
the counter. However, a zero claim might be false.
Now \R\ can either accept this claim or challenge it (and win iff it is false).
For the case where a nonzero claim is accepted we
add a rule $(\hat{q},{\it nz},-1,q'')$ to $\delta_A$ and a rule
$(q_2,{\it nz},q'',(0,0))$ to $\delta_V$. 
Similarly for a case where a zero claim is accepted we
add a rule $(\hat{q},z,0,q')$ to $\delta_A$ and a rule
$(q_1,z,q',(0,0))$ to $\delta_V$.
In either case, two rounds of the simulation game
emulate the transition of $M$ and the invariant is maintained.
Since \R\ must not spuriously accept a choice that \V\ has not made,
we add transitions $(q_2,z,U,(0,0))$, $(q_1,{\it nz},U,(0,0))$ 
and $(U,\alpha,U,(0,0))$ to $\delta_V$ for every $\alpha \in Act$.
Here \V\ goes to the universal state $U$ and wins the simulation game.

As explained above, a nonzero claim by \V\ is always correct and thus
cannot be challenged. The following construction implements a challenge by 
\R\ to a zero claim of \V. We add a transition
$(\hat{q},c,0,q_c)$ to $\delta_A$ and transitions
$(q_1,c,q_c,(0,0))$ and $(q_2,c,U,(0,0))$ to $\delta_V$. 
If \R\ spuriously issues a challenge to a zero claim that \V\ has not
made (where he is in state $q_2$) then \V\ goes to the universal state $U$ 
and wins. Otherwise, both players are in state $q_c$ and the simulation game 
is in state $((q_c,z),(q_c,(x,y)))$ for $z=x+y$ by the invariant above.
The challenge is evaluated by the following rules.
We add a rule $(q_c,c,-1,q_c)$ to $\delta_A$ and a rule
$(q_c,c,q_c,(0,-1))$ to $\delta_V$. 
If the zero claim by \V\ was false then $x>0$ and thus $y < z$.
Therefore, \R\ wins the simulation game from $((q_c,z),(q_c,(x,y)))$,
because eventually the second counter of \V\ reaches zero before \R.
However, if the zero claim by \V\ was true then $x=0$ and $y=z$ and
both players reach zero (and get stuck) at the same time, and thus \V\ wins.

Finally, we add a transition $(q_{\it halt}, h, 0, q_{\it halt})$ to $\delta_A$,
i.e., the state $q_{\it halt}$ is winning for \R\ by a special action $h$. 

To summarize, if $M$ does not halt then \V\ wins the simulation game by a 
faithful emulation of the infinite computation of $M$, 
because every challenge by \R\ will also lead to a 
win by \V. 
Conversely, if $M$ halts then every faithful emulation would lead to state 
$q_{\it halt}$ and a win by \R. The only possible deviation from a faithful
emulation is a false zero claim by \V. However, the challenge 
construction above ensures that \R\ can also win in this case.
Thus \R\ wins the simulation game iff $M$ halts.
\end{proof}

Since OCA subsume OCN (by Def.~\ref{def:oca-ocn}), 
this implies that simulation preorder between
OCA and VASS of dimension $\ge 2$ is also undecidable.
Thus, with Lemma~\ref{lem:dec:reverse-reduction}, we obtain the following
theorem.

\begin{theorem}\label{thm:undec-onecounter}
The fixed initial credit problem is undecidable for one-counter energy games
of energy dimension $n \ge 2$.\\
I.e., given a one-counter energy game $G=(Q_0, Q_1,\delta,\delta_0,n)$ of dimension $n \ge 2$,
it is undecidable whether a configuration $(q,k,E)$ is winning for Player $0$.
\end{theorem}


\section{Conclusion and future work}\label{sec:conclusion}

Our decidability results for infinite-state energy games show a 
surprising distinction between pushdown automata and one-counter automata.
While pushdown energy games are undecidable even for the simplest case of a
1-dimensional energy, the decidability border for one-counter energy games
runs between the cases of 1-dimensional and multi-dimensional energy.

Some questions for future work concern the decidability of the {\em unknown
initial credit problem} for infinite-state energy games.
We have shown the undecidability of this problem for pushdown energy games,
but it remains open for one-counter energy games.
While we have shown that the winning sets of 1-dimensional one-counter energy games are
semilinear, our proof does not yield an effective procedure for
constructing these semilinear sets (which would immediately imply the decidability
of the unknown initial credit problem).

In multidimensional one-counter energy games, the winning sets 
are certainly not semilinear (even though they are 
upward-closed w.r.t.\ the energy). Otherwise, one could enumerate semilinear sets
and effectively check (by Presburger arithmetic) 
whether they are winning sets containing the initial configuration, 
and thus obtain a positive semi-decision procedure. 
Together with the obvious negative semidecidability (by expanding the game tree)
this would yield an impossible decision procedure for the fixed initial credit
problem.
In spite of this, the unknown initial credit problem could still be decidable.

The unknown initial credit problem for energy games is closely related to
limit-average games.
While even 1-dimensional limit-average games are undecidable for pushdown
automata \cite{Chatterjee-Velner:LICS2012}, 
the decidability of (multi-dimensional) limit-average games on one-counter
automata is open.

\section{Acknowledgement}
Piotr Hofman acknowledges a partial support by the Polish NCN grant
2012/05/N/ST6/03226.\\
Richard Mayr, Parosh Aziz Abdulla, Mohamed Faouzi Atig and Patrick Totzke 
acknowledge partial support by UK Royal Society grant IE110996.


\end{document}